\newcommand{\EQ}{\begin{equation}}
\newcommand{\EN}{\end{equation}}
\newtheorem{theorem}{Theorem}
\newtheorem{definition}{Definition}
\newtheorem{corollary}{Corollary}
\newtheorem{proposition}{Proposition}
\newtheorem{lemma}{Lemma}
\newtheorem{example}{Example}
\newtheorem{remark}{Remark}
\newcommand{\FF}{{\mathbb F}}
\newcommand{\Fk}{{\mathbb F}_{2^k}}  
\newcommand{\F}{{\mathbb F}_{2^n}}
\newcommand{\G}{{\cal G}}
\newcommand{\J}{\mathcal{J}}
\newcommand{\W}{{\cal W}}
\newcommand{\Sp}{{\cal S}}
\newenvironment{proof}{\begin{trivlist}\item[]{\em Proof. }}%
{\samepage\hfill$\diamond$\end{trivlist}}
\date{}
\begin{document}

\title{Permutations via linear translators}
%\author[Cepak, Charpin and Pasalic]{Nastja Cepak, Pascale Charpin and Enes Pasalic}
\author[upr]{Nastja Cepak}
\ead{nastja.cepak@gmail.com}
\author[inria]{Pascale Charpin}
\ead{pascale.charpin@inria.fr}
\author[upr]{Enes Pasalic}
\ead{enes.pasalic6@gmail.com}

\address[upr]{University of Primorska, FAMNIT \& IAM, Glagolja\v ska 6, 6000 Koper, Slovenia}
\address[inria]{INRIA, 2 rue Simone Iff, Paris, France}
%\email{enes.pasalic6@gmail.com} 
\begin{keyword}Permutation\sep involution\sep AGW criterion \sep compositional inverse \sep
complete permutation\sep translator\sep linear structure\sep component functions
\end{keyword}

\begin{abstract}
We show that many infinite  classes of permutations over finite fields can be constructed
via translators with a large choice of parameters.  We first characterize some functions
having linear translators, based on which  several families of permutations are then derived. 
Extending the results of \cite{kyu}, we give in several cases the
compositional inverse of these permutations. The connection 
with complete permutations is also utilized to provide further infinite classes of permutations. Moreover, we
propose new tools to study permutations of the form 
$x\mapsto x+(x^{p^m}-x+\delta)^s$ and a few infinite classes of permutations of this form are  proposed.
\end{abstract}

\maketitle
%\vspace{1cm}
%{\bf\Large Main draft}

\section{Introduction}
The main goal of this paper is to contribute to  the study of permutations of
finite fields. A finite field of order $p^n$ is denoted
$\FF_{p^n}$ where $p$ is any prime and $n$ a positive integer.
A polynomial $F\in\FF_{p^n}[x]$ is said to be a permutation if its associated
mapping $x\mapsto F(x)$ over $\FF_{p^n}$ is bijective. During the last few years there has been a tremendous progress in construction methods and characterization of many infinite classes of permutations, see a  survey on recent works in \cite{Hou14} and the references therein. The use of permutations in 
applications such as coding is well-known and understood. The bijectivity is also  an important cryptographic criterion used in the design
of some block ciphers. For applicative purposes the use of sparse permutations,
{\it i.e.,} which can be expressed with few terms, is also an important
 property along with  the degree and the nonlinearity
 which are referred  to as  the standard cryptographic criteria. For this reason,
we are mainly interested in specifying  design methods of sparse permutations, having a few polynomial terms.

\medskip
This paper is based on the work of Kyureghyan \cite{kyu} where
permutations over $\FF_{p^{rk}}$ of kind
\EQ\label{eq:form}
F~:~x\mapsto L(x)+L(\gamma) h(f(x)), ~f:\FF_{p^{rk}}\rightarrow \FF_{p^{k}},~
h:\FF_{p^{k}}\rightarrow \FF_{p^{k}},
\EN
are studied. Here $\gamma\in\FF^*_{p^{rk}}$ is a so-called 
{\it $b$-linear translator}
of $f$ (cf. Definition~\ref{de:tr}) and $L$ a linear permutation. Note that this construction
is in a certain sense a generalization of the so-called 
{\it switching construction} \cite{ChaKyu-ffa,ChKySu}. Akbary, Ghioca and
 Wang unified the Kyureghyan's construction  for arbitrary subsets $S \subset \FF_{p^{n}}$ (not only subfields of $\FF_{p^{n}}$) 
 along with proposing a few other constructions in \cite{AkGhWa11}. This general criterion is now 
called AGW criterion \cite[Theorem 8.1.39]{MullenWang}. After these pioneering works a series of papers  
\cite{TuZeLiHe15,TuZeHu14,TuZeJi15,YuDi} (among others) treated the same topic of specifying new classes of permutation polynomials of the above form. For a nice survey of recent achievements related to this particular class of permutations  the reader is referred to \cite{Hou14}. Nevertheless, most of the recent contributions attempt to specify suitable functions $h,f$ and $L$ as in (\ref{eq:form}), or alternatively, for $F$ given by 
\begin{equation}\label{eq:specclass}
F~:~x\mapsto \gamma(f(x)+\delta)^s+L(x),~\delta\in\FF^*_{p^n},
\end{equation}
 to specify suitable degree $s$, $\delta \in \FF_{p^n}$, the function $f$ and also some particular field characteristic $p$, see for instance \cite{TuZeLiHe15} where three classes of permutations of the form (\ref{eq:specclass}) were specified for $p=3$.

 Our main purpose is to emphasize that the use of 
functions $f$ which have translators gives us the possibility  to construct many
infinite classes of permutations with a large choice of parameters. A suitable use of  this method allows us also to construct linear permutations and sparse permutations
  of high degree and to give their compositional inverses. 
 Moreover, a connection of this class of permutations to complete
 permutations is  considered and also more general results related
 to an explicit specification of permutations of the form (\ref{eq:specclass}) are given (for instance valid for any degree $s$ for suitable $f$ and $\delta$). 

\medskip
More specifically this paper is organized as follows. After preliminaries, Section \ref{se:trans}
is devoted to the existence of translators $\gamma$ for functions  $f$, where $f$ has a sparse polynomial representation.
 In Section \ref{se:inv}, we are interested in the compositional
inverses of permutations of type (\ref{eq:form}), similarly to, for instance,  the  work of Tuxanidy and Wang \cite{TuxaWang}.
Provided that $f$ has a $b$-translator $\gamma$, the function $g:u\mapsto u+bh(u)$ must permute $\FF_{p^{k}}$  to ensure the bijectivity  of $F$ \cite{kyu}. Nevertheless, when $b=0$ this holds for any $h$
leading to several families of permutations with their compositional
inverse. It is shown later  that when $b\ne 0$, by defining a class of involutions in odd characteristic
(Proposition \ref{pr:inv2}), we are still able to specify  the compositional  inverses in certain cases.   

Permutations $F$ of type (\ref{eq:form}) are closely related to so-called  complete 
mappings through the condition that $g$ must be a permutation. However, 
note that $h$ does not need to be bijective to apply Theorem~\ref{th:main},
and therefore $g$ is not necessarily a complete permutation. The connection to complete permutations, which we explain and illustrate in Section \ref{se:cplte}, is rather made to relate the number of recent works on this topic for the purpose of specifying new classes of permutations.

In Section \ref{se:spec}, a special class of functions given by (\ref{eq:specclass}),
%\begin{equation}\label{eq:specclass}
%F~:~x\mapsto \gamma(f(x)+\delta)^s+L(x)~~,~\delta\in\FF^*_{p^n},
%\end{equation}
which has been  studied in several papers 
(see \cite{TuZeLiHe15,TuZeJi15,YuDiWaPi} and references therein),
 is considered. 
We first show that Theorem \ref{th:main} applies to this class of permutations when $\delta$ and $f$ 
satisfy some simple conditions (Proposition \ref{pr:perm}), which essentially  gives us the possibility of specifying a family of infinite classes of permutations for any $s$. This is also the main difference to many previous works e.g. \cite{TuZeLiHe15,TuZeJi15,YuDiWaPi}, where some specific classes of permutations were identified only for certain exponents $s$. Moreover,
we specify the conditions  that $F$, as specified above, is a permutation  for both $p=2$ and $p$ odd.
In both cases, we have been  able to adapt  Theorem \ref{th:main} and to satisfy these conditions, thus
providing other infinite classes of permutations (Propositions 
\ref{pr:var1} and \ref{pr:var2}).
Actually,  our generalized framework  turns out to give  another (simpler) method  to prove the bijectivity
of some functions studied in  \cite{TuZeJi15,YuDi,TuZeLiHe15}. 

 On the other hand, it turns out that the results in Section~\ref{se:spec} can be derived from the results in \cite{AkGhWa11}, more precisely from Theorem 5.1 and Proposition 5.9 in \cite{AkGhWa11}. Nevertheless, our proof technique may have independent significance in the analysis of similar classes of permutations and more importantly our approach may potentially give an insight in the spectra of the component functions which has a great importance in cryptographic applications. 

\section{Preliminaries}
We recall  some definitions or results given in \cite{kyu}.
 Throughout this paper $p$ designates any prime. 
\begin{definition}\label{de:tr}
Let $n=rk$, $1\leq k\leq n$. Let $f$ be a function from $\FF_{p^n}$ to 
$\FF_{p^k}$, $\gamma\in\FF_{p^n}^*$
and $b$ fixed in $\FF_{p^k}$.
Then $\gamma$ is a $b$-{\it linear translator} for $f$ if
\[
f(x+u\gamma)-f(x)=ub,~~\mbox{for all $x\in\FF_{p^n}$ and for all $u\in\FF_{p^k}$}.
\]
In particular, when  $k=1$, $\gamma$ is usually  said to be a 
 $b$-{\it linear structure} of the  function $f$ (where $b\in\FF_p$),
that is
\[
f(x+\gamma)-f(x)=b~~\mbox{for all $x\in\FF_{p^n}$}.
\]
\end{definition}
We denote by $Tr(\cdot)$ the absolute trace on $\F$ and by $T^n_k(\cdot)$
the trace function from $\FF_{p^n}$ to  $\FF_{p^k}$, where $k$ divides $n$:
\[
T^n_k(\beta)=\beta+\beta^{p^k}+\dots+\beta^{p^{(n/k-1)k}}.
\]
We have also to recall  that a $\FF_{p^k}$-linear function 
on $\FF_{p^n}$ ($n=rk$) is of the type 
\[
L:\FF_{p^n}\rightarrow \FF_{p^n},~L(x)=\sum_{i=0}^{r-1}\lambda_ix^{p^{ki}}~,~
\lambda_i\in\FF_{p^n}.
\]
In the case when $n=2k$, it is easy to describe such linear permutations.
The next lemma is proved useful in the sequel.
\begin{lemma}\label{le:lin}
Let $n=2k$ and $L:\FF_{p^n}\rightarrow \FF_{p^n}$, $L(x)=ax+bx^{p^k}$, where
$a,b\in\FF^*_{p^n}$. Let $\G$ be the subgroup of  $\FF^*_{p^n}$ of order $p^k+1$.
Then we have:
\begin{description}
\item[(i)] $L$ is a permutation if and only if  $ab^{-1}\not\in\G$;
\item[(ii)]  $L$ is an involution if and only if $T^n_k(a)=0$ and 
$b^{p^k+1}=1-a^2$.
\end{description}
\end{lemma}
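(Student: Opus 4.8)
The plan is to set $q=p^k$ and identify $\FF_{p^n}$ with $\FF_{q^2}$, so that the Frobenius map $x\mapsto x^{p^k}=x^q$ is the nontrivial element of $\mathrm{Gal}(\FF_{q^2}/\FF_q)$ and satisfies $x^{q^2}=x$ on $\FF_{q^2}$. In this notation $L(x)=ax+bx^q$ is $\FF_q$-linear, hence a permutation of the finite set $\FF_{q^2}$ iff it is injective, i.e.\ iff its kernel is trivial. To handle \textbf{(i)} I would solve $L(x)=0$ with $x\ne 0$: this gives $x^{q-1}=-ab^{-1}$. Since $(x^{q-1})^{q+1}=x^{q^2-1}=1$, the homomorphism $x\mapsto x^{q-1}$ maps $\FF_{q^2}^*$ \emph{onto} $\G$: its kernel is $\FF_q^*$ of order $q-1$, so the image has order $(q^2-1)/(q-1)=q+1=|\G|$, forcing the image to be all of $\G$. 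Therefore a nonzero element of the kernel of $L$ exists iff $-ab^{-1}\in\G$.

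To finish (i) I would note the characteristic-independent fact that $-1\in\G$: for $p=2$ one has $-1=1\in\G$, while for $p$ odd the order $q+1$ of $\G$ is even, so the cyclic group $\G$ contains the unique involution $-1$ of $\FF_{q^2}^*$. Consequently $-ab^{-1}\in\G\iff ab^{-1}\in\G$, and $L$ is a permutation exactly when $ab^{-1}\notin\G$, as claimed.

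For \textbf{(ii)} I would compute $L\circ L$ directly, using $(ax+bx^q)^q=a^qx^q+b^qx$ (because $x^{q^2}=x$), obtaining
\[
L(L(x))=(a^2+b^{q+1})\,x+b(a+a^q)\,x^q .
\]
Then $L$ is an involution iff $L\circ L=\mathrm{id}$, that is, iff the linearized polynomial $(a^2+b^{q+1}-1)x+b(a+a^q)x^q$ vanishes on all of $\FF_{q^2}$. Since its degree is at most $q<q^2$, vanishing at all $q^2$ points forces both coefficients to be zero. As $b\ne 0$, this yields $a+a^q=0$, i.e.\ $T^n_k(a)=a+a^{p^k}=0$, together with $a^2+b^{q+1}=1$, i.e.\ $b^{p^k+1}=1-a^2$. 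Note that an involution is automatically bijective, so no separate permutation hypothesis is needed; as a consistency check one may verify that these conditions give $(ab^{-1})^{q+1}=-a^2/(1-a^2)\ne 1$, in agreement with (i).

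The calculations here are routine; the two subtle points, which I would state carefully, are the characteristic-independent membership $-1\in\G$ (needed to replace $-ab^{-1}$ by $ab^{-1}$ in (i)) and the justification for equating coefficients in (ii) via the degree bound on linearized polynomials over $\FF_{q^2}$. Neither is a serious obstacle, but both are where an over-hasty argument could go wrong.
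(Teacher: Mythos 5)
Your proof is correct and follows essentially the same route as the paper: both reduce (i) to whether $-ab^{-1}$ lies in the image $\G$ of $x\mapsto x^{p^k-1}$ (the paper writes $L(x)=x(a+bx^{p^k-1})$), and both establish (ii) by computing $L\circ L(x)=(a^2+b^{p^k+1})x+b(a+a^{p^k})x^{p^k}$ and equating coefficients. You carefully justify two points the paper leaves implicit --- that $-1\in\G$ in every characteristic, and the degree bound that lets one equate coefficients of the linearized polynomial --- but this is added rigor rather than a different method.
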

\begin{proof}
Since $L(x)=x(a+bx^{p^k-1})$, $ab^{-1}\not\in\G$ means that the kernel of
$L$ is $\{0\}$.  Now we have
\[
L\circ L(x)=x(a^2+b^{p^k+1})+x^{p^k}b(a+a^{p^k}).
\]
Thus $L$ is an involution if and only if $a+a^{p^k}=0$ and $a^2+b^{p^k+1}=1$.
When $p$ is odd, note that $a+a^{p^k}=0$ implies $a^2\in \FF^*_{p^k}$.
The case $p=2$ is an instance of \cite[Proposition 5]{CMS}.
\end{proof}
The following general theorem is given in \cite{kyu}
without proof since the proof is an equivalent of those given in 
 \cite{ChaKyu-ffa} and \cite{ChSa11},  when $k=1$ and $k=n$, respectively.
\begin{theorem}
A   function $f$ from $\FF_{p^n}$ to 
$\FF_{p^k}$, $n=rk$, has a linear translator if and only if there is a
 non-bijective $\FF_{p^k}$-linear function $L$ on $\FF_{p^n}$ such that
\[
f(x)=T^n_k\left(H\circ L(x)+\beta x\right)
\]
for some $H:\FF_{p^n}\rightarrow \FF_{p^n}$ and $\beta\in\FF_{p^n}$. In this case
the kernel of $L$ is contained in the subspace of linear translators
(including $0$ by convention). 
\end{theorem}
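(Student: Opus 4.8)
The plan is to prove the two implications separately, extracting the kernel statement for free from the backward direction.

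For the easy direction I would assume $f(x)=T^n_k(H\circ L(x)+\beta x)$ with $L$ non-bijective and $\FF_{p^k}$-linear, and exploit that $\ker L\neq\{0\}$. Taking any nonzero $\gamma\in\ker L$ and any $u\in\FF_{p^k}$, $\FF_{p^k}$-linearity yields $L(x+u\gamma)=L(x)+uL(\gamma)=L(x)$, so the term $H\circ L$ is invariant under $x\mapsto x+u\gamma$. Since $T^n_k$ is $\FF_{p^k}$-linear with $T^n_k(uy)=u\,T^n_k(y)$ for $u\in\FF_{p^k}$, I would then compute
\[
f(x+u\gamma)-f(x)=T^n_k(\beta u\gamma)=u\,T^n_k(\beta\gamma),
\]
identifying $\gamma$ as a $b$-linear translator with $b=T^n_k(\beta\gamma)$. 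Because $\gamma$ was arbitrary in $\ker L$, this simultaneously shows that the whole kernel (with $0$ adjoined by convention) consists of translators, which is the final assertion.

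For the converse I would build $L$, $\beta$, $H$ explicitly. Regarding $\FF_{p^n}$ as an $r$-dimensional $\FF_{p^k}$-space, the line $\FF_{p^k}\gamma$ is one-dimensional since $\gamma\neq0$; I would fix an $\FF_{p^k}$-complement $V$, so each $x$ decomposes uniquely as $x=u\gamma+v$ with $u\in\FF_{p^k}$ and $v\in V$, and take $L$ to be the projection $u\gamma+v\mapsto v$ onto $V$. This $L$ is $\FF_{p^k}$-linear and non-bijective, with kernel $\FF_{p^k}\gamma$. The translator identity, applied at base point $v$, gives $f(u\gamma+v)=f(v)+ub$, so $f$ is pinned down by its values on $V$ together with $b$.

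It remains to choose $\beta$ and $H$. Since the trace form $(x,y)\mapsto T^n_k(xy)$ is nondegenerate, the functional $\beta\mapsto T^n_k(\beta\gamma)$ is a nonzero $\FF_{p^k}$-linear map onto $\FF_{p^k}$, so I can select $\beta$ with $T^n_k(\beta\gamma)=b$. As $T^n_k:\FF_{p^n}\to\FF_{p^k}$ is surjective, for every $v\in V$ I can then pick $H(v)\in\FF_{p^n}$ with $T^n_k(H(v))=f(v)-T^n_k(\beta v)$, defining $H$ off $V$ arbitrarily (only its values on the image of $L$, which is $V$, are ever used). Writing $x=u\gamma+v$ one checks
\[
T^n_k(H\circ L(x)+\beta x)=T^n_k(H(v))+u\,T^n_k(\beta\gamma)+T^n_k(\beta v)=f(v)+ub=f(x),
\]
giving the required form. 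I expect the converse to be the crux: the delicate point is that a single $\beta$ must reproduce the linear growth $ub$ in the $\gamma$-direction, and it is precisely nondegeneracy of the trace form that makes $T^n_k(\beta\gamma)=b$ solvable, after which the latitude in $H$ soaks up the remaining behaviour on $V$.
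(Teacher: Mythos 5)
Your proof is correct. Note that the paper itself gives no proof of this theorem --- it is quoted from \cite{kyu} with the remark that the argument is equivalent to those in \cite{ChaKyu-ffa} ($k=1$) and \cite{ChSa11} ($k=n$) --- so there is nothing in the text to compare against line by line; but your argument is the standard one and it is complete. The forward direction (any nonzero $\gamma\in\ker L$ is a $T^n_k(\beta\gamma)$-translator, since $H\circ L$ is killed by the shift and $T^n_k$ is $\FF_{p^k}$-linear) is exactly right and does yield the kernel-containment claim for free. The converse is also sound: the projection $L$ along $\FF_{p^k}\gamma$ onto a complement $V$ is $\FF_{p^k}$-linear (hence a $p^k$-polynomial), non-bijective with kernel $\FF_{p^k}\gamma$; nondegeneracy of the trace form gives $\beta$ with $T^n_k(\beta\gamma)=b$; and surjectivity of $T^n_k$ lets $H$ absorb $f|_V - T^n_k(\beta\,\cdot)$ on $V=L(\FF_{p^n})$. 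The verification $T^n_k(H\circ L(x)+\beta x)=f(v)+ub=f(x)$ for $x=u\gamma+v$ closes the loop. This matches in spirit the decomposition-plus-trace-duality proofs of the cited special cases.
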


Now we have the following construction, introduced by  Kyureghyan
in \cite[Theorem 1]{kyu}. This  result can also be obtained
by using the AGW criterion, see  Section 6 in \cite{AkGhWa11}.
\begin{theorem}{\rm\cite[Theorem 1]{kyu}}\label{th:main}
Let $n=rk$, with $r, k>1$. Let $L$ be a $\FF_{p^k}$-linear permutation on $\FF_{p^n}$.
Let $f$ a function from $\FF_{p^n}$ onto $\FF_{p^k}$, 
$h:\FF_{p^k}\rightarrow\FF_{p^k}$, $\gamma\in\FF_{p^n}^*$
and $b$ is fixed in $\FF_{p^k}$.
Assume that $\gamma$ is a $b$-linear translator of $f$. Then
$$F(x)=L(x)+L(\gamma)h(f(x))$$ permutes $\FF_{p^n}$ if and only if 
$g:u\mapsto u+bh(u)$ permutes $\FF_{p^k}$.
\end{theorem}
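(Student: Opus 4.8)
The plan is to exploit the fact that $L$ is a linear permutation in order to strip it off by composing on the left with $L^{-1}$. Writing $G:=L^{-1}\circ F$, the $\FF_{p^k}$-linearity of $L$ gives $L^{-1}\bigl(L(\gamma)h(f(x))\bigr)=\gamma\,h(f(x))$ since $h(f(x))\in\FF_{p^k}$, so
\[
G(x)=x+\gamma\,h(f(x)).
\]
Because $L$ is a bijection, $F$ permutes $\FF_{p^n}$ if and only if $G$ does, and it therefore suffices to analyse $G$.

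The crucial step is a semiconjugacy relation between $G$ and $g$ through $f$. Applying the $b$-translator hypothesis with the scalar $u=h(f(x))\in\FF_{p^k}$ yields
\[
f(G(x))=f\bigl(x+h(f(x))\gamma\bigr)=f(x)+b\,h(f(x))=g(f(x)),
\]
that is, $f\circ G=g\circ f$. This identity says that $G$ respects the partition of $\FF_{p^n}$ into the level sets $f^{-1}(t)$, $t\in\FF_{p^k}$, sending $f^{-1}(t)$ into $f^{-1}(g(t))$. On a fixed level set $f^{-1}(t)$ the map $G$ acts as the constant translation $x\mapsto x+\gamma\,h(t)$, which is injective; in fact it is a bijection of $f^{-1}(t)$ onto $f^{-1}(g(t))$, the surjectivity following from another use of the translator property, since for $y\in f^{-1}(g(t))$ the element $y-\gamma\,h(t)$ lies in $f^{-1}(t)$ and is mapped by $G$ back to $y$.

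With this fiber description both implications follow. If $g$ permutes $\FF_{p^k}$, then a collision $G(x_1)=G(x_2)$ forces $g(f(x_1))=g(f(x_2))$, hence $f(x_1)=f(x_2)$ by injectivity of $g$, so $x_1$ and $x_2$ lie in the same level set, on which $G$ is the injective translation; thus $x_1=x_2$ and $G$ permutes $\FF_{p^n}$. Conversely, if $g$ is not a permutation, I would choose $t_1\neq t_2$ with $g(t_1)=g(t_2)$; since $f$ is onto, both level sets $f^{-1}(t_1)$ and $f^{-1}(t_2)$ are nonempty, and because $G$ maps $f^{-1}(t_1)$ onto $f^{-1}(g(t_1))=f^{-1}(g(t_2))$, any value $G(x_2)$ with $x_2\in f^{-1}(t_2)$ is also attained at some $x_1\in f^{-1}(t_1)$, yielding a collision with $x_1\neq x_2$.

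I expect the only delicate point to be the converse (non-permutation) direction: it is precisely here that the hypothesis that $f$ maps \emph{onto} $\FF_{p^k}$ and the surjectivity half of the fiber bijection, rather than mere injectivity, come into play. I would therefore take care to record that the level sets involved are nonempty and that $G$ attains every value in $f^{-1}(g(t_1))$ already on $f^{-1}(t_1)$.
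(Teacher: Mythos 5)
Your proof is correct. Be aware, though, that the paper contains no proof of this statement to compare against: Theorem~\ref{th:main} is quoted verbatim from Kyureghyan \cite{kyu}, with the remark that it can also be obtained from the AGW criterion of \cite{AkGhWa11}. Your argument is essentially the standard one underlying both sources: strip off $L$ by passing to $G=L^{-1}\circ F=x+\gamma h(f(x))$ (valid because $L^{-1}$ is again $\FF_{p^k}$-linear and $h(f(x))$ is a scalar in $\FF_{p^k}$), establish the semiconjugacy $f\circ G=g\circ f$ via the translator identity with $u=h(f(x))$, and verify that $G$ carries each fiber $f^{-1}(t)$ bijectively onto $f^{-1}(g(t))$ by a translation. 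All the delicate points are handled: the surjectivity half of the fiber bijection and the nonemptiness of the fibers (from $f$ being onto) are exactly what the converse direction needs, and you flag them explicitly. A minor observation you could add: when $b\neq 0$ the surjectivity of $f$ is automatic from the translator property, and when $b=0$ the map $g$ is the identity so the converse implication is vacuous; the stated hypothesis that $f$ is onto simply covers both cases uniformly.
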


\section{On functions having translators}\label{se:trans}
In this section, motivated by the possibility of specifying new classes of permutations by means of Theorem~\ref{th:main}, we investigate the existence of linear translators for sparse polynomials $f:\FF_{p^n} \rightarrow \FF_{p^k}$ (the problem being difficult for arbitrary polynomials). More precisely, we  show the non-existence of linear translators for monomials and derive the exact form of binomials for which 
there exist linear translators. The monomial trace function of the form $Tr_k^n(x^d)$ is also considered.

The following two results are frequently used throughout this section. 

\begin{theorem}{\rm [Lucas' theorem]}
Let $a,b$ be  positive integers and $a=\sum_{i=1}^n a_ip^i$, $b=\sum _{i=1}^n b_ip^i$ their  $p$-adic expansions, where $a_i, b_i \in \FF_p$.
	Then $$ \binom{a}{b} \pmod{p}\equiv  \binom{a_1}{b_1}\cdots \binom{a_n}{b_n}.$$
	It follows that $ \binom{a}{b} \pmod p \neq 0$ if and only if 
$b\preceq a$, i.e., $b_i \leq a_i$ for all $i$.
\end{theorem}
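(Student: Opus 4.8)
The plan is to prove the congruence by a generating-function argument in the polynomial ring $\FF_p[X]$, from which the divisibility criterion follows immediately. The starting observation is the Frobenius (``freshman's dream'') identity: since $\binom{p}{j}\equiv 0\pmod p$ for $0<j<p$, we have $(1+X)^p=1+X^p$ in $\FF_p[X]$, and hence, by iterating, $(1+X)^{p^i}=1+X^{p^i}$ for every $i\ge 0$.

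Next I would expand $(1+X)^a$ in two different ways. On one hand, the binomial theorem gives $(1+X)^a=\sum_b \binom{a}{b}X^b$, so modulo $p$ the coefficient of $X^b$ is exactly $\binom{a}{b}\bmod p$. On the other hand, using the $p$-adic expansion $a=\sum_i a_i p^i$ together with the Frobenius identity,
\[
(1+X)^a=\prod_i (1+X)^{a_i p^i}=\prod_i \bigl(1+X^{p^i}\bigr)^{a_i}=\prod_i \Bigl(\sum_{j=0}^{a_i}\binom{a_i}{j}X^{j p^i}\Bigr)
\]
in $\FF_p[X]$. Comparing the coefficient of $X^b$ on both sides yields the claimed formula.

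The one point requiring care — and the crux of the argument — is identifying, on the right-hand side, exactly which choices of exponents contribute to $X^b$. A monomial in the expanded product has exponent $\sum_i j_i p^i$ with $0\le j_i\le a_i\le p-1$; since each $j_i$ lies in $\{0,\dots,p-1\}$, this is a genuine base-$p$ representation, and by the uniqueness of such representations it equals $b=\sum_i b_i p^i$ if and only if $j_i=b_i$ for all $i$. Consequently the coefficient of $X^b$ equals $\prod_i \binom{a_i}{b_i}$, which establishes $\binom{a}{b}\equiv\prod_i\binom{a_i}{b_i}\pmod p$. The stated criterion is then an immediate corollary: the product is nonzero modulo $p$ precisely when every factor $\binom{a_i}{b_i}$ is nonzero, i.e. when $b_i\le a_i$ for all $i$, which is exactly the condition $b\preceq a$.
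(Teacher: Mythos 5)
The paper states Lucas' theorem as a classical fact and supplies no proof of its own, so there is no in-paper argument to compare against. Your generating-function proof in $\FF_p[X]$ is the standard one and is correct: the Frobenius identity $(1+X)^{p^i}=1+X^{p^i}$, the factorization of $(1+X)^a$ along the $p$-adic digits of $a$, and—crucially, as you note—the uniqueness of base-$p$ representations (valid because each $j_i\le a_i\le p-1$) to isolate the single monomial contributing to $X^b$, giving $\binom{a}{b}\equiv\prod_i\binom{a_i}{b_i}\pmod p$ and hence the divisibility criterion. One cosmetic remark: the paper's statement writes the expansions as $\sum_{i=1}^n a_ip^i$, which omits the units digit; your argument correctly runs the index from $i=0$, which is what is intended.
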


Let now $f(x):\FF_{p^n}\rightarrow \FF_{p^n}$, $f(x)=\sum _{i=0}^{p^n-1}b_ix^i$.
In \cite{EnesDeriv},  a compact formula relating the coefficients $b_i$
 of $f$ and of its derivative 
$f(x+u\gamma )-f(x)=\sum_{t=0}^{p^n-2}c_t x^t$ was derived.  More precisely
\begin{equation}\label{eq:connect} 
c_t=\sum_{i=t+1}^{p^n-1}\binom{i}{t} (u\gamma )^{i-t}b_i,~~
t\in \lbrace 0, 1, \ldots , p^n-2 \rbrace.
\end{equation}
The first application of these results regards the existence of translators for $f:\FF_{p^n} \rightarrow \FF_{p^k}$ which is either monomial or binomial.
\begin{proposition}\label{le:dPk}
	Let $f(x)=x^d$, $f: \FF_{p^n} \rightarrow \FF_{p^k}$,
 where $n=rk$ and $r>1$.
	
	\begin{enumerate}[i)] 
	\item Then the image set of $f$ is in $\FF_{p^k}$ if and only if the exponent $d$ is of the form
	\begin{equation} \label{eq:dform} 
	d=j(p^{k(r-1)} + p^{k(r-2)} + \cdots +p^k+1),
\end{equation}	
	 for some $j\in \lbrace 1, \ldots , p^k-1\rbrace$.
	 \item The function $f$ does not have  a linear translator in sense of Definition~\ref{de:tr}.
	 \end{enumerate}
\end{proposition}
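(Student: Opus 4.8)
The plan is to handle the two parts in order, with part (i) feeding directly into part (ii).

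For part (i) I would translate the image condition into the multiplicative structure of $\FF_{p^n}^*$. Since $\FF_{p^k}$ is exactly the fixed set of $x\mapsto x^{p^k}$, the inclusion $\mathrm{Im}(f)\subseteq\FF_{p^k}$ is equivalent to $x^{dp^k}=x^d$ for every $x\in\FF_{p^n}$, hence to $x^{d(p^k-1)}=1$ for every $x\in\FF_{p^n}^*$. As $\FF_{p^n}^*$ is cyclic of order $p^n-1$, this holds if and only if $(p^n-1)\mid d(p^k-1)$. I would then factor $p^n-1=(p^k-1)N$ with $N=p^{k(r-1)}+\cdots+p^k+1$, which reduces the divisibility to $N\mid d$; restricting to $1\le d\le p^n-1=(p^k-1)N$ then forces $d=jN$ with $j\in\{1,\dots,p^k-1\}$, as claimed.

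For part (ii) I would argue by contradiction, using that part (i) already gives $d=jN$. Suppose $\gamma\in\FF_{p^n}^*$ were a $b$-linear translator, so that $f(x+u\gamma)-f(x)=ub$ for all $x\in\FF_{p^n}$ and all $u\in\FF_{p^k}$. Applying the coefficient formula (\ref{eq:connect}) to $f(x)=x^d$, whose only nonzero coefficient is $b_d=1$, gives
\[
f(x+u\gamma)-f(x)=\sum_{t=0}^{d-1}\binom{d}{t}(u\gamma)^{d-t}x^t,
\]
a polynomial in $x$ of degree at most $d-1\le p^n-2$. The translator condition says this equals the constant $ub$; since the degree is below $p^n$ the reduced polynomial representation is unique, so comparing coefficients yields $\binom{d}{t}(u\gamma)^{d-t}=0$ for every $t$ with $1\le t\le d-1$. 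Choosing any $u\in\FF_{p^k}^*$ and using $\gamma\ne0$ then forces $\binom{d}{t}\equiv0\pmod p$ for all $1\le t\le d-1$.

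The crux is now arithmetic. By Lucas' theorem the interior binomial coefficients $\binom{d}{t}$, $0<t<d$, all vanish modulo $p$ precisely when $d$ has a single nonzero base-$p$ digit equal to $1$, i.e. when $d$ is a power of $p$. But $d=jN$ cannot be a power of $p$: one checks $N\equiv1\pmod p$, so $\gcd(N,p)=1$, while $N>1$ because $r>1$; hence $N$ has a prime factor different from $p$, which divides $d$ and rules out $d=p^m$. This contradiction shows no linear translator exists. I expect this final number-theoretic step — verifying that $jN$ is never a power of $p$ — to be the only genuine obstacle, and it is dispatched by the single observation that $N=(p^n-1)/(p^k-1)$ is a nontrivial integer coprime to $p$.
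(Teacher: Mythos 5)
Your proof is correct and follows essentially the same route as the paper: the subfield condition forces $d=j(p^n-1)/(p^k-1)$, and then the coefficient formula (\ref{eq:connect}) together with Lucas' theorem forces $d$ to be a power of $p$, which is impossible. Your closing observation that $N=(p^n-1)/(p^k-1)\equiv 1\pmod p$ with $N>1$ makes explicit the number-theoretic step the paper dismisses with ``this is impossible.''
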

\begin{proof}
	$i)$ Since $f$  maps to some subfield $\FF _{p^k}$, $(x^d)^{p^k}=x^d$
 must be true. This means
	$x^{d(p^k-1)}=1$ and therefore $d(p^k-1)\equiv 0 \pmod{p^n-1}$. It follows that 
	\begin{eqnarray*}
	d &=& j\frac{p^n-1}{p^k-1}=j(p^{k(r-1)} + p^{k(r-2)} + \cdots + 1),
	\end{eqnarray*}
	for some $j\in \lbrace 1, \ldots , p^k-1\rbrace$. 

\medskip
\noindent
$ii)$ If a function $f(x)=\sum _{i=0}^{p^n-1}b_ix^i$ has
 a linear translator, it must satisfy two necessary but not sufficient
 conditions:
	\begin{enumerate}
\item it must map to a subfield $\FF _{p^k}$ as requested by the definition, and 
\item its coefficients $b_i$ must satisfy 
$c_t=0$, for $t\in \lbrace 1, \ldots , p^n-2 \rbrace$ and $c_0 \neq 0$, where 
$c_t$ and $c_0$ are defined above by (\ref{eq:connect}).
	\end{enumerate}

The first condition implies that $d$ must be of the form (\ref{eq:dform}), 
	for  $j\in \lbrace 1, \ldots , p^k-1\rbrace$.
		Since $b_i=0$ for $i \neq d$, the second condition implies that
	$c_t=\binom{d}{t}(u\gamma )^{d-t}=0$, for all 
$t\in \lbrace 1, \ldots ,d-1 \rbrace$. This is satisfied only if 
 $\binom{d}{t} \equiv 0 \pmod{p}$ for all $t$.
	Using Lucas' theorem, the only possibility is $t\npreceq d$,
 for all $t$. But since our $d$ satisfies (\ref{eq:dform}),
for some $j\in \lbrace 1, \ldots , p^k-2\rbrace$,
	 this is
	impossible.	
\end{proof}

\begin{proposition}\label{prop:binom}
	Let $f(x)=\beta x^i+x^j$, $i<j$, where $f:\FF_{p^n} \rightarrow \FF_{p^k}$, $\beta \in \FF^*_{p^n}$ and $n=rk$, where $r >1$. Then the function $f$ has a linear translator if and only if $n$ is even,
	$k=\frac{n}{2}$, and furthermore $f(x)=T^n_k(x)$.
\end{proposition}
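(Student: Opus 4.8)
The plan is to reuse the two necessary conditions isolated in the proof of Proposition~\ref{le:dPk}: a binomial $f$ with a (nontrivial) linear translator must both (A) map into $\FF_{p^k}$ and (B) have reduced derivative equal to the nonzero constant $c_0$, that is $c_t=0$ for $t\in\{1,\dots,p^n-2\}$ and $c_0\neq0$, the $c_t$ being given by (\ref{eq:connect}) with $b_i=\beta$, $b_j=1$ as the only nonzero coefficients. I would first read off from (A) the possible shapes of $(i,j,\beta)$ and then use (B) to force $f=T^n_k$.

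For (A), raising $f^{p^k}=f$ and reducing exponents modulo $p^n-1$ gives $\beta^{p^k}x^{ip^k}+x^{jp^k}=\beta x^i+x^j$. Since both coefficients are nonzero, the Frobenius $x\mapsto x^{p^k}$ must match the two monomials in one of exactly two ways: it fixes each exponent ($ip^k\equiv i$, $jp^k\equiv j$, and $\beta\in\FF_{p^k}$), so that $i$ and $j$ are multiples of $s:=(p^n-1)/(p^k-1)$; or it swaps them ($ip^k\equiv j$, $jp^k\equiv i\pmod{p^n-1}$), which forces $\beta=1$.

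Condition (B) is then applied twice. Its top part is cheap: for $i\le t\le j-1$ only the $b_j$-term survives in (\ref{eq:connect}), so $c_t=\binom{j}{t}(u\gamma)^{j-t}$, and taking $u=1$ (with $\gamma\neq0$) forces $\binom{j}{t}\equiv0\pmod p$ there. By Lucas' theorem the largest proper $p$-submask of $j$, namely $t_{\max}=j-p^{\ell_0}$ where $p^{\ell_0}$ is the lowest nonzero base-$p$ place of $j$, must then lie below $i$, giving the key inequality $j-i<p^{\ell_0}$. The constant part is decisive: since $c_0=f(u\gamma)=ub$ for all $u\in\FF_{p^k}$, reducing the $u$-exponents modulo $p^k-1$ and using $b\neq0$ together with $\beta\gamma^i,\gamma^j\neq0$ forces the two reduced $u$-exponents to coincide and to equal $1$, i.e. $i\equiv j\equiv1\pmod{p^k-1}$. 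This at once kills the fixing case: there $j=m_js$ has its lowest nonzero place below position $k$, so $p^{\ell_0}<p^k<s\le j-i$, contradicting $j-i<p^{\ell_0}$.

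It remains to finish the swapping case, which I expect to be the main obstacle. Combining $j-i\equiv0\pmod{p^k-1}$ with $j-i<p^{\ell_0}$ forces $\ell_0\ge k$, i.e. $p^k\mid j$; and $ip^{2k}\equiv i\pmod{p^n-1}$ yields $M\mid i$ with $M:=(p^n-1)/(p^{\gcd(n,2k)}-1)$ and $\gcd(n,2k)=k\gcd(r,2)$. The work is to rule out $M>1$, i.e. $r>2$: for odd $r$ one checks $M=s$ forces $j\equiv i$, contradicting $i<j$, while for even $r\ge4$ the divisor $M$ makes the base-$p$ digits of $i$, and hence of $j$, periodic of period $2k$ repeated at least twice, which as in the fixing case produces a $p$-submask of $j$ inside $[i,j-1]$ and violates (B). Once $r=2$ is forced (so $n=2k$, $M=1$), writing $i=i_1p^k+i_0$ and noting that the swap exchanges the two $k$-digit blocks turns $p^k\mid j$ into $i_1=0$; then $i\equiv1\pmod{p^k-1}$ with $1\le i<p^k$ leaves only $i=1$, whence $j=p^k$, $\beta=1$ and $f(x)=x+x^{p^k}=T^n_k(x)$. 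The converse is immediate, as for $n=2k$ any $\gamma$ with $T^n_k(\gamma)\neq0$ is a $T^n_k(\gamma)$-linear translator of $T^n_k$. Finally I would stress that the nonvanishing $c_0\neq0$ (a genuine $b\neq0$) is indispensable: allowing $b=0$ would admit the Frobenius twists $(T^n_k)^{p^a}$, which possess only trivial $b=0$ translators.
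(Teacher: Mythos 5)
Your argument is correct, but it takes a genuinely different route from the paper's. The paper attacks the derivative condition first: a submask argument with Lucas' theorem applied to the system (\ref{eq:coef}) forces both exponents to be powers of $p$, so that $f$ is a linearized binomial $\beta x^{p^{i'}}+x^{p^{j'}}$; then the requirement that $f(\gamma u)$ be $\FF_{p^k}$-linear in $u$ forces $k\mid i'$ and $k\mid j'$, and only at the end does the subfield condition $f^{p^k}=f$ pin down $r=2$, $(i',j')=(0,k)$ and $\beta=1$. You instead exploit the subfield condition first, observing that the Frobenius must either fix or swap the two monomials modulo $p^n-1$, and then use the derivative conditions purely as digit inequalities: vanishing of $c_t$ on $[i,j-1]$ gives $j-i<p^{\ell_0}$, and the constant term (with $b\ne0$) gives $i\equiv j\equiv 1\pmod{p^k-1}$. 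This bypasses the paper's combinatorial step ``if the exponents are not both $p$-powers one reaches a contradiction'' (which, as written there, needs a little extra care when $i$ is a $p$-power but $j$ is not), at the price of a somewhat longer arithmetic case analysis to exclude $r\neq 2$ in the swap case; I checked that analysis (the odd-$r$ reduction via $M=s$ forcing $j\equiv i$, and the even-$r\ge4$ reduction via $p^{\ell_0}<p^{2k}<M\le j-i$) and it is sound. Your closing caveat is also well taken and worth keeping: since Definition~\ref{de:tr} allows $b=0$, the binomials $\bigl(T^{2k}_k(x)\bigr)^{p^a}=x^{p^a}+x^{p^{a+k}}$ with $0<a<k$ admit $0$-translators (any $\gamma$ with $T^{2k}_k(\gamma)=0$) without being $T^n_k$; both your proof and the paper's tacitly assume $b\ne0$ (the paper through the condition $c_0\ne0$ inherited from the proof of Proposition~\ref{le:dPk}), so the statement must be read with that restriction.
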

\begin{proof}
	Let $f(x)=\beta x^i + x^j,i<j,\beta \neq 0$.
	The function $f$  must satisfy the same two properties as in
 the proof of Proposition~\ref{le:dPk}. 
The second property, according to Definition \ref{de:tr} 
and (\ref{eq:connect}), implies that $c_t$ must satisfy 
\begin{equation}\label{eq:coef}
0=c_t= \left\lbrace \begin{array}{ll}
	0 & \mathrm{\ for\ }j\leq t\leq p^n-2 \\
	\binom{j}{t}(u\gamma )^{j-t} & \mathrm{\ for\ }i \leq t < j \\ 
	\binom{i}{t}(u\gamma)^{i-t}\beta + \binom{j}{t}(u\gamma )^{j-t} & \mathrm{\ for\ }0 < t< i
	\end{array} \right. .
\end{equation}
	
	Suppose $i$ and $j$ are both powers of $p$ so that $i=p^{i'}, j=p^{j'}$.
 Since   $t\npreceq j$ and $t\npreceq i$ for any $t$ in the above range,
 by Lucas' theorem  $c_t=0$ for all $t \neq 0$.

Assume now that $i$ and $j$ are not both powers of $p$ and
that (\ref{eq:coef}) holds. First, we must have
$t\not\prec j$ for   $i \leq t < j$ (to have $c_t=0$ for such $t$);
in particular $i\not\prec j$. Then, there exists  $t$, $0<t<i$, such that 
either $t\prec j$ or   $t\prec i$ for $t <i$. 
Since $c_t=0$   we have:
\begin{itemize}
\item if $t\prec i$ then $t\prec j$, because otherwise $\beta=0$, 
a contradiction;
\item if $t\prec j$ then $t\prec i$, since otherwise 
$c_t=\binom{j}{t}(u\gamma )^{j-t}\ne 0$;
\end{itemize}
 Thus,   $t\prec i$ if and only if
  $ t\prec j$, for all $t\in \lbrace 1, \ldots , i-1\rbrace$.
But, since $i\not\prec j$ there is $t'<i$ which satisfies $t'\prec i$, 
and $t'\not\prec j$, a contradiction.  
	
	Let us now analyze when $f(x)=\beta x^{p^{i'}} + x^{p^{j'}}$.
 Note that we want to have
\[
f(x+\gamma u)-f(x)=f(\gamma u)=\beta (\gamma u)^{p^{i'}}+ (\gamma u)^{p^{j'}}
= u A(\beta,\gamma),
\]
where $A$ is some function of $\beta,\gamma$. Then $k$ must divide 
$i'$ and $j'$; set $i'=uk$ and $j'=vk$ ($0\leq u<v\leq r-1$). 
Since $F$ maps to a subfield $\FF_{p^k}$,
the following must  be satisfied for all $x$:
\begin{eqnarray*}
	(\beta x^{p^{uk}} + x^{p^{vk}})^{p^k} - \beta x^{p^{uk}} - x^{p^{vk}}=0\\
	\beta^{p^{k}} x^{p^{(u+1)k}} + x^{p^{(v+1)k}}- \beta x^{p^{uk}} - x^{p^{vk}}=0.
	\end{eqnarray*}
	Hence, the exponents $\{p^{(u+1)k},p^{(v+1)k},p^{uk},p^{vk}\}$ cannot be two by two
distinct. This forces $u=v+1\pmod{r}$ and further  $v=u+1\pmod{r}$. This implies $u=u+2\pmod{r}$
showing that the only solution is  $u=0$ with $r=2$ and $v=r-1=1$ (using also $0\leq u<v\leq r-1$).
Finally, we must have
\[
\beta^{p^{k}} x^{p^k} + x- \beta x - x^{p^{k}}=
x^{p^k}\left(\beta^{p^{k}}-1\right)-x(\beta-1)=0,~~\mbox{for all $x$},
\]
which implies $\beta=1$ so that $F(x)=T^{2k}_k(x)$ completing the proof.
\end{proof}

Any function $f:\FF_{p^n}\rightarrow \FF_{p^k}$, $n=rk$, can be expressed
as $f(x)=T^n_k(P(x))$, where $P$ is some polynomial in $\FF_{p^n}[x]$.
Note that this representation is not unique. In the rest of this section
 we analyze
the case when $P$ has a single term, the cases with several terms being significantly more complicated.
The following result further refines the choice of $d$ for 
$f(x)=T^n_k(\beta x^d)$.
 We denote by $wt_H(d)$ the Hamming weight of $d$ which is the number of nonzero components in the $p$-adic expansion of integer $d$.
\begin{proposition}\label{prop:tracemon1}
	The function $f(x)=T^n_k(\beta x^d)$, $\beta \in \FF^*_{p^n}$,
 can have a linear translator only if  $wt_H(d) \in \{ 1, 2 \}$. 
When  $wt_H(d)=2$, then $d$ must be equal to $p^j(1+p^i)$ for some
$0\leq i,j\leq n-1$, $i\not\in\{0,n/2\}$. In particular, $f(x)=T^n_k(\beta x^{2p^j})$ cannot have linear translators.
\end{proposition}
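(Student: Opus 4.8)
The plan is to reduce the statement to an analysis of the first difference and then track its leading monomial. First I would observe that if $\gamma$ is a $b$-linear translator of $f$, then in particular $D_{u\gamma}f(x):=f(x+u\gamma)-f(x)=ub$ is constant in $x$ for every $u\in\FF_{p^k}$. Writing $f(x)=T^n_k(\beta x^d)=\sum_{\ell=0}^{r-1}\beta^{p^{k\ell}}x^{e_\ell}$ with $e_\ell\equiv dp^{k\ell}\pmod{p^n-1}$, and recalling that multiplication by $p$ modulo $p^n-1$ cyclically permutes the base-$p$ digits, each $e_\ell$ is a digit rotation of $d$; in particular $wt_H(e_\ell)=wt_H(d)=:w$ for all $\ell$. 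Since $D_{u\gamma}f=T^n_k(\beta\Delta)$ with $\Delta=(x+u\gamma)^d-x^d$, the whole problem becomes the question of for which $d$ the map $T^n_k(\beta\Delta)$ can be a constant function of $x$.

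Next I would expand $\Delta$ through the digit structure. Letting $\ell_1<\dots<\ell_w$ be the nonzero digit positions of $d$ with digits $d_{\ell_t}$, Frobenius additivity gives $x^d=\prod_t(x^{p^{\ell_t}})^{d_{\ell_t}}$ and $(x+u\gamma)^d=\prod_t\bigl(x^{p^{\ell_t}}+(u\gamma)^{p^{\ell_t}}\bigr)^{d_{\ell_t}}$. Reducing the lowest digit by one (that is, passing to the exponent $d-p^{\ell_1}$) produces in $\Delta$ a monomial with coefficient $d_{\ell_1}(u\gamma)^{p^{\ell_1}}\neq0$ (nonzero since $1\le d_{\ell_1}\le p-1$), whose exponent has Hamming weight $w$ when $d_{\ell_1}\ge2$ and $w-1$ when $d_{\ell_1}=1$. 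The key claim is that such a leading contribution of weight $\ge2$ cannot be annihilated by the remaining terms: among all monomials of this maximal weight occurring in $T^n_k(\beta\Delta)=\sum_\ell\beta^{p^{k\ell}}\Delta^{p^{k\ell}}$, the one of largest exponent is attained, Lucas' theorem guarantees its binomial coefficient is nonzero, and distinctness of the relevant rotated exponents prevents cancellation. Consequently the algebraic degree of $D_{u\gamma}f$ is at least that weight. Since the difference must be constant, any weight-$\ge2$ leading term is forbidden: this forces $w\le2$, and when $w=2$ the symmetric argument applied to the highest position forces both nonzero digits of $d$ to equal $1$, i.e. $d=p^j(1+p^i)$ with $i>0$.

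It then remains to exclude $i=n/2$ and to prove the corollary. For $i=n/2=:m$ one has $x^d=(x^{p^j})^{1+p^m}$, the relative norm into $\FF_{p^m}$, so the only surviving part of $\Delta$ is linear, namely $b^{p^m}x^{p^j}+bx^{p^{j+m}}+b^{1+p^m}$ with $b=(u\gamma)^{p^j}$. Computing the relative trace $T^n_m$ of this linear part collapses it to $(\beta+\beta^{p^m})\,T^n_m\bigl(bx^{p^{j+m}}\bigr)$, so requiring $T^n_k(\beta\Delta)$ to be constant forces $\beta+\beta^{p^m}=0$; but then $T^n_m(\beta x^d)=\beta x^d+\beta^{p^m}x^d=0$ since $x^d\in\FF_{p^m}$, whence $f\equiv0$ degenerates, recovering the phenomenon already seen in Proposition~\ref{prop:binom}. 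Thus a nondegenerate $f$ requires $i\neq n/2$. Finally, for $d=2p^j$ (with $p$ odd, so that $wt_H(d)=1$ but the single digit equals $2$, a case the degree argument leaves open) one computes directly $\Delta=2(u\gamma)^{p^j}x^{p^j}+(u\gamma)^{2p^j}$; the linear part $T^n_k\bigl(2\beta(u\gamma)^{p^j}x^{p^j}\bigr)$ is a single rotated trace, so each coefficient of $x^{p^s}$ is hit exactly once and is nonzero, hence this linear form is not the zero map and $D_{u\gamma}f$ is nonconstant. Therefore $T^n_k(\beta x^{2p^j})$ has no linear translator.

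The main obstacle is the non-cancellation claim in the second step: because the trace superposes the $r$ Frobenius rotations of $\Delta$, one must rule out that two distinct rotated exponents of maximal weight coincide and cancel. I expect to handle this by selecting the globally largest exponent of the prescribed weight and showing it is attained by a unique rotation (a weight-$\ge2$ digit pattern shifted by a multiple of $k$ can match another $\Delta$-exponent only under a rotational symmetry excluded here), with Lucas' theorem ensuring the associated coefficient is nonzero; the remaining parts of the argument are routine digit bookkeeping.
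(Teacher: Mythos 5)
Your route is a from-scratch digit analysis of the difference polynomial, whereas the paper settles the whole proposition in three lines by a reduction: by transitivity of the trace, $T^n_1\bigl(\beta(x+u\gamma)^d-\beta x^d\bigr)=T^k_1\bigl(T^n_k(\beta(x+u\gamma)^d-\beta x^d)\bigr)=T^k_1(bu)$, so for any fixed $u\neq 0$ the element $u\gamma$ is a $T^k_1(bu)$-linear structure of $T^n_1(\beta x^d)$, and the classification of linear structures of monomial trace functions (\cite[Theorem 5]{ChaKyu-fq}) then delivers all three assertions at once, including the exclusion of $i\in\{0,n/2\}$. You are in effect re-proving that cited theorem, which is legitimate in principle but considerably harder than the reduction, and it is exactly at the hard point that your argument breaks.

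The difficulty is that your key claim --- that a maximal-weight monomial of $\Delta$ of weight $\geq 2$ survives the superposition of the $r$ Frobenius rotations inside $T^n_k(\beta\Delta)$ --- is not only deferred but false as stated. Take $p$ odd, $n=2m$, $k=m$ (so $r=2$) and $d=2+p^m$, a weight-$2$ exponent with lowest digit $2$, which the proposition must exclude since it is not of the form $p^j(1+p^i)$. The unique weight-$2$ monomial of $\Delta$ is $2(u\gamma)\,x^{1+p^m}$; but $p^m(1+p^m)\equiv 1+p^m \pmod{p^n-1}$, so both rotations land on the same exponent and the total coefficient in $T^n_m(\beta\Delta)$ is $2T^n_m(\beta u\gamma)=2u\,T^n_m(\beta\gamma)$, which vanishes for every $u$ as soon as $T^n_m(\beta\gamma)=0$. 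Thus the ``largest exponent of maximal weight'' is not uniquely attained and the entire weight-$2$ layer cancels; one must descend to the weight-$1$ layer (where $\beta(u\gamma)^{p^m}x^{2}$ does survive, so the proposition itself is safe --- but not by your argument). This rotational collapse is precisely the phenomenon that makes $i=n/2$ exceptional in the statement, so it cannot be dismissed as ``excluded here''; handling it in general amounts to redoing the case analysis of coinciding rotated exponents that constitutes the cited theorem. Your treatments of the degenerate case $i=n/2$ and of $d=2p^j$ (for $p$ odd) are fine, but the central non-cancellation step needs either a genuine proof covering such coincidences or, more economically, replacement by the trace-transitivity reduction.
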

\begin{proof}
In \cite[Theorem 5]{ChaKyu-fq},  it was proved that the function 
$T^n_1(\beta x^d)$ can have a linear structure
 only if $wt_H(d)\in \{1,2\}$.  Especially, when $wt_H(d)=2$ then $d=p^j(1+p^i)$ for some
$0\leq i,j\leq n-1$, $i\not\in\{0,n/2\}$.

Suppose now  that the function $f(x)=T^n_k(\beta x^d)$ has a $b$-translator $\gamma$. Then,
	\begin{eqnarray*}
	T^n_1\left(\beta (x+u\gamma)^d-\beta x^d \right) &=&  
T^k_1\left(T^n_k (\beta(x+u\gamma)^d-\beta x^d) \right)\\
	&=& T^k_1(bu).
	\end{eqnarray*}
	If we now fix $u\in \FF_{p^k}$, then $u\gamma$ becomes the 
$T^k_1(bu)$-linear structure of $T^n_1(\beta x^d)$, which gives the result. 
 In particular,  the function $T_1^n(\beta x^{2p^j})$ (corresponding to  $i=0$
 in $d=p^j(1+p^i)$) cannot have linear translators.	
\end{proof}

The following result was mentioned by Kyureghyan in \cite{kyu}.

\begin{lemma}\label{le:lin1}
	Let $f$ be an affine  function from $\FF_{p^n}$ to  $\FF_{p^k}$ given by 
	 $f(x)=T^n_k(\beta x)+a$, where $\beta\in\FF_{p^n}$ and $a\in\FF_{p^k}$.
	Then, any $\gamma\in\FF_{p^n}$ is a $b$-translator of $f$, with 
	$b=T^n_k(\beta\gamma)$.
\end{lemma}
\begin{proof}
	For any $\gamma\in\FF_{p^n}$ we have
	\[
	f(x+u\gamma)-f(x) = T^n_k(\beta (x+u\gamma))-T^n_k(\beta x)= 
	T^n_k(\beta u\gamma)=uT^n_k(\beta\gamma),
	\]
	for all $u\in\FF_{p^k}$ and $x\in\FF_{p^n}$.
\end{proof}
The next result regards the existence of linear translators for the trace
 of quadratic monomials which in general contains $r$ polynomial terms for $n=rk$.

\begin{lemma}\label{le:lt}
	Let $n=rk$ and $f(x)= T^n_k(\beta x^{p^i+p^j})$, where $i < j$.
	Then, $f$ has a derivative independent of $x$, that is,
 $f(x+u\gamma)-f(x)=T^n_k (\beta (u\gamma)^{p^i+p^j} )$
	for all $x\in \FF_{p^n}$,
	all $u\in \FF _{p^k}$, if and only if $\beta,\gamma \in \FF_{p^n}$ are related through, 
	\begin{equation}\label{eq:quadtrace}
	\beta \gamma^{p^{i+lk}}+\beta^{p^{(r-l)k}}\gamma^{p^{i+(r-l)k}}=0,
	\end{equation}
	where $0 < l < r$ satisfies $j=i+kl$. \\
	In particular, if $\beta \in \FF_{p^k}$ then  
$f(x+u\gamma)-f(x)=\beta T^n_k ((u\gamma)^{p^i+p^{i+kl}} )$ 
	 if and only if  $\gamma ^{p^{2kl}-1}=-1$, which requires $\frac{r}{\gcd(r,2l)}$ is even when $p>2$.
\end{lemma}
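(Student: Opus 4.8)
The plan is to compute the derivative explicitly and identify its $x$-dependent part. Writing $D=p^i+p^j$ and using that the Frobenius is additive, I would first factor
\[
(x+u\gamma)^D=(x^{p^i}+(u\gamma)^{p^i})(x^{p^j}+(u\gamma)^{p^j}),
\]
which after subtracting $x^D$ yields
\[
f(x+u\gamma)-f(x)=T^n_k(\beta(u\gamma)^{D})+T^n_k(\beta(x^{p^i}(u\gamma)^{p^j}+(u\gamma)^{p^i}x^{p^j})).
\]
The first summand is exactly the asserted right-hand side $T^n_k(\beta(u\gamma)^{p^i+p^j})$, so the statement reduces to showing that the second (the $x$-dependent cross term) vanishes for all $x\in\FF_{p^n}$ and all $u\in\FF_{p^k}$ if and only if (\ref{eq:quadtrace}) holds.

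The key simplification is that $j-i=kl$ is a multiple of $k$, so for $u\in\FF_{p^k}$ one has $u^{p^j}=u^{p^i}$, while $u^{p^i}\in\FF_{p^k}$ because $\FF_{p^k}$ is stable under the Frobenius. Hence both cross terms carry the common scalar factor $u^{p^i}\in\FF_{p^k}$, which I can pull outside $T^n_k$ by its $\FF_{p^k}$-linearity. Taking $u=1$ then shows the vanishing condition is equivalent to
\[
T^n_k(\beta x^{p^i}\gamma^{p^j}+\beta\gamma^{p^i}x^{p^j})=0\quad\text{for all }x\in\FF_{p^n}.
\]
Next I would expand $T^n_k(z)=\sum_{m=0}^{r-1}z^{p^{mk}}$ and, using $j=i+kl$ and reducing exponents modulo $n$, collect the coefficient of each monomial $x^{p^{i+mk}}$. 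Since these $r$ exponents are distinct powers of $p$ below $p^n$, such a linearized polynomial vanishes on $\FF_{p^n}$ iff every coefficient is zero, giving
\[
\beta^{p^{mk}}\gamma^{p^{j+mk}}+\beta^{p^{(m-l)k}}\gamma^{p^{i+(m-l)k}}=0,\qquad m=0,\dots,r-1.
\]
For $m=0$ this is precisely (\ref{eq:quadtrace}) once $-l$ is rewritten as $r-l$ in the exponents, and raising the $m=0$ identity to the power $p^{mk}$ produces the $m$-th one; as the Frobenius is bijective, all $r$ equations are equivalent to the single equation (\ref{eq:quadtrace}). This exponent bookkeeping modulo $n$ and the Frobenius reduction of the $r$ conditions to one is the only genuinely delicate point of the argument.

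For the final claim, assuming $\beta\in\FF_{p^k}$ (hence $\beta\neq0$) makes $\beta^{p^{(r-l)k}}=\beta$, so cancelling $\beta$ turns (\ref{eq:quadtrace}) into $\gamma^{p^{i+lk}}=-\gamma^{p^{i+(r-l)k}}$. Using $(r-l)k\equiv-lk\pmod n$ and then applying $x\mapsto x^{p^{lk}}$ followed by the inverse Frobenius $x\mapsto x^{p^{n-i}}$ (with $(-1)^{p^{lk}}=-1$ in every characteristic) converts this to $\gamma^{p^{2kl}}=-\gamma$, i.e. $\gamma^{p^{2kl}-1}=-1$. To obtain the necessity of $r/\gcd(r,2l)$ being even when $p>2$, I would iterate the Frobenius $x\mapsto x^{p^{2kl}}$ on $\gamma^{p^{2kl}}=-\gamma$, getting $\gamma^{p^{2klt}}=(-1)^t\gamma$; at $t=r'=r/\gcd(r,2l)$ the exponent $2klt$ is a multiple of $n$, so the left side is just $\gamma$, forcing $(-1)^{r'}=1$ and hence $r'$ even for odd $p$. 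The mild obstacle here is only the modular counting that identifies the return index $r'=r/\gcd(r,2l)$.
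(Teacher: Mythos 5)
Your argument is correct and, at its core, follows the same route as the paper's proof: expand $(x+u\gamma)^{p^i+p^j}$ through the Frobenius, use $u\in\FF_{p^k}$ to pull the scalar $u^{p^i}$ out of $T^n_k$, and reduce the claim to the vanishing of the $x$-dependent cross term. The differences are minor but worth noting. First, the paper does not take $k\mid j-i$ as given; it first shows that if $i\not\equiv j\pmod k$ then the exponents in the cross term are pairwise distinct, so its vanishing would force $\beta=0$ or $\gamma=0$, and only then sets $j=i+kl$. You assume the existence of $l$ from the statement, which is a defensible reading of the lemma but leaves that degenerate case untreated. Second, for the key equivalence the paper applies $z\mapsto z^{p^{(r-l)k}}$ inside the trace to rewrite $T^n_k\bigl(\beta x^{p^{i+lk}}(u\gamma)^{p^i}\bigr)$ as $T^n_k\bigl(\beta^{p^{(r-l)k}}x^{p^i}(u\gamma)^{p^{i+(r-l)k}}\bigr)$, so that both cross terms align on $x^{p^i}$ and (\ref{eq:quadtrace}) drops out of the nondegeneracy of the trace form; your full expansion of $T^n_k$ and comparison of all $r$ coefficients, followed by the observation that the $r$ equations are Frobenius conjugates of the single equation (\ref{eq:quadtrace}), is an equivalent, slightly more pedestrian mechanism. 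Third, for the parity condition when $p>2$ the paper simply cites \cite[Claim 4]{ChaKyu-fq}, whereas your iteration $\gamma^{p^{2klt}}=(-1)^t\gamma$ up to $t=r/\gcd(r,2l)$ proves the stated necessity directly and self-containedly. All the steps you do carry out, including the implicit use of $\beta,\gamma\neq 0$ when cancelling, are sound.
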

\begin{proof}
For $f(x)= T^n_k(\beta x^{p^i+p^j})$, we have
	\begin{eqnarray*}
	f(x+u\gamma)-f(x) &=& T^n_k\left(\beta(x+u\gamma)^{p^i+p^j}\right) -T^n_k \left(\beta x^{p^i+p^j}\right) \\
	&=&T^n_k \left(\beta x^{p^i}(u\gamma)^{p^j} + \beta x^{p^j}(u\gamma)^{p^i} +\beta(u\gamma)^{p^i+p^j} \right) \\
	&=&T^n_k \left(\beta x^{p^i}(u\gamma)^{p^j}\right) + T^n_k \left(\beta x^{p^j}(u\gamma)^{p^i}\right) +
	T^n_k \left(\beta (u\gamma)^{p^i+p^j} \right).
	\end{eqnarray*}
	The above expression will be independent of $x$ if and only if $T^n_k (\beta x^{p^i}(u\gamma)^{p^j}) =
	-T^n_k (\beta x^{p^j}(u\gamma)^{p^i})$, for all $x\in \FF_{p^n}$ and all $u\in \FF_{p^k}$. 

	 We analyze this equation in terms of the congruence $i \equiv j \pmod{k}$. If $i \not \equiv j \pmod{k}$, it follows that all the exponents are pairwise different. Therefore, all the coefficients must equal 0
and so either  $\beta = 0$ or  $\gamma= 0$. But  $\gamma$ cannot be 0, following from Definition~\ref{de:tr}, and  $\beta$ cannot be 0, since then $f(x) = 0$.

It follows that $i \equiv j \pmod{k}$,  thus $j=i+kl$
	for some $0< l<r$.  Note that we exclude the case $l=0$.
Indeed, in this case, $f$ is linear for  $p=2$ and $f(x)=x^{2p^i}$ for $p>2$,
a function which cannot have a linear translator by Proposition \ref{prop:tracemon1}.
Therefore, we have
	\begin{eqnarray}\label{eq:bin}
	f(x+u\gamma)-f(x) &=& T^n_k \left(\beta x^{p^i}(u\gamma)^{p^{i+lk}}\right) +
	T^n_k \left( \beta x^{p^{i+lk}}(u\gamma)^{p^i}\right) +
	T^n_k \left(\beta (u\gamma)^{p^i+p^{i+lk}} \right)\nonumber\\
	&=&  T^n_k \left(\beta x^{p^i}(u\gamma)^{p^{i+lk}}+
\beta^{p^{(r-l)k}} x^{p^{i}}(u\gamma)^{p^{i+(r-l)k}}+
 \beta(u\gamma)^{p^i+p^{i+lk}} \right)\nonumber\\
	&=& u^{p^i} T^n_k \left(x^{p^{i}} \left(\beta \gamma^{p^{i+lk}}+
\beta^{p^{(r-l)k}}\gamma^{p^{i+(r-l)k}}\right)\right)\nonumber\\ 
&& + ~u^{2p^i}T^n_k \left(\beta \gamma^{p^i+p^{i+lk}}\right) .
	\end{eqnarray}
	Thus, we must have  
\[
\beta \gamma^{p^{i+lk}}+\beta^{p^{(r-l)k}}\gamma^{p^{i+(r-l)k}}=0,
\]
to eliminate $x$.

In particular, if $\beta \in \FF_{p^k}$ then the above condition reduces to $\gamma ^{p^{2lk}-1} =-1$, which
	for $p$ odd has a solution exactly when $\frac{n}{\gcd(n,2kl)}=\frac{r}{\gcd(r,2l)}$ is even (see \cite[Claim 4]{ChaKyu-fq}, for instance).
	
\end{proof}
\begin{remark} It can be easily verified that \[
T^n_k(\beta x^{p^i+p^j})=\left(T^n_k(a x^{1+p^{j-i}} )\right)^{p^i},~a=\beta^{p^{n-i}}, ~j>i.
\]
Thus, alternatively,  one can consider the mapping $x\mapsto T^n_k(a x^{1+p^{s}})$. 

\end{remark} 
The result below  specifies further the existence of translators for quadratic trace monomials. 

\begin{theorem}\label{th:lt}
	Let $n=rk$ and  $f(x)=T^n_k(\beta x^{p^i+p^{j}})$, where $r>1$ and $j=i + kl$ for some $0 <l <r$. Assume that $\gamma\in \FF^*_{p^n}$ is a $b$-translator of $f$, where $b=T^n_k (\beta \gamma^{p^i+p^{i+lk}})$. Then :
	\begin{enumerate}[i)]
	\item If $p=2$ the condition (\ref{eq:quadtrace}) in Lemma \ref{le:lt} must be satisfied and either 
\[
b=T^n_k (\beta \gamma^{2^i+2^{i+lk}})~~\mbox{and $i=sk-1$ for some $0<s\leq r$,}
\]
 or $b=0$. In particular, if $\beta \in \FF_{2^k}$ then $\gamma=1$ is a
 $0$-translator of $f$ if $r$ is even and $\gamma=1$ is a $\beta$-translator
 if $r$ is odd, where in the latter case $i=sk-1$.
	\item If $p >2$  we necessarily have $b=0$. In  particular,
 if $\beta \in \FF_{p^k}$ then $n$ is even and $\gamma$ must satisfy
 $\gamma^{p^{2kl}-1}=-1$ and $Tr_k^n(\gamma^{1+p^{lk}})=0$.
	\end{enumerate}
\end{theorem}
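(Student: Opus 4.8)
The plan is to read off the derivative from Lemma~\ref{le:lt} and then turn the translator condition into a single congruence. Since $\gamma$ is a $b$-translator, Definition~\ref{de:tr} gives $f(x+u\gamma)-f(x)=ub$ for all $x\in\FF_{p^n}$, $u\in\FF_{p^k}$, and comparing with (\ref{eq:bin}) we obtain
\[
ub = f(x+u\gamma)-f(x) = u^{p^i}\,T^n_k\!\left(x^{p^i}\left(\beta\gamma^{p^{i+lk}}+\beta^{p^{(r-l)k}}\gamma^{p^{i+(r-l)k}}\right)\right) + u^{2p^i}\,b,
\]
where I have used the hypothesis $b=T^n_k(\beta\gamma^{p^i+p^{i+lk}})$ to identify the coefficient of $u^{2p^i}$ (this identification is also forced by evaluating at $u=1$, where $u^{2p^i}=1$). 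The first step is to isolate the $x$-dependence: fixing any $u\neq0$ and letting $x$ vary, the right-hand side is constant, so the additive map $x\mapsto T^n_k(x^{p^i}(\cdots))$ vanishes identically. This is precisely condition (\ref{eq:quadtrace}); equivalently, it is automatic because Lemma~\ref{le:lt} already guarantees (\ref{eq:quadtrace}) whenever a translator exists.

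With (\ref{eq:quadtrace}) in force the identity collapses to $u^{2p^i}b=ub$ for all $u\in\FF_{p^k}$. The key reduction is that for $u\in\FF_{p^k}$ one has $u^{p^k}=u$, hence $u^{p^i}=u^{p^{i_0}}$ with $i_0=i\bmod k$ and $u^{2p^i}=u^{2p^{i_0}}$. Therefore either $b=0$, or the monomial map $u\mapsto u^{2p^{i_0}}$ is the identity on $\FF_{p^k}$, which holds exactly when $(p^k-1)\mid(2p^{i_0}-1)$, i.e. $2p^{i_0}\equiv1\pmod{p^k-1}$. Everything now depends on this congruence, and this is where the two characteristics separate.

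For $p=2$ the congruence reads $2^{\,i_0+1}\equiv1\pmod{2^k-1}$, forcing $i_0+1\equiv0\pmod k$, i.e. $i_0=k-1$, i.e. $i=sk-1$; together with $0\le i\le n-1$ this gives $0<s\le r$, yielding part (i). Specialising to $\beta\in\FF_{2^k}$ and $\gamma=1$: condition (\ref{eq:quadtrace}) holds since $\beta+\beta^{2^{(r-l)k}}=\beta+\beta=0$, and $b=T^n_k(\beta)=r\beta$, which is $0$ when $r$ is even (so $\gamma=1$ is a $0$-translator with no further constraint) and $\beta$ when $r$ is odd (where the translator condition then forces $i=sk-1$). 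For $p$ odd the decisive and cleanest step is a parity obstruction: $2p^{i_0}-1$ is odd while $p^k-1$ is even, so $(p^k-1)\nmid(2p^{i_0}-1)$, the congruence has no solution, and hence $b=0$ necessarily. Specialising to $\beta\in\FF_{p^k}$, the Lemma~\ref{le:lt} criterion $\gamma^{p^{2kl}-1}=-1$ applies (which makes $\tfrac{r}{\gcd(r,2l)}$ even, hence $r$ and $n$ even), while $b=0$ rewrites as $T^n_k(\gamma^{1+p^{lk}})=0$ after pulling out the scalar $\beta$ and using $T^n_k(z^{p^i})=(T^n_k(z))^{p^{i_0}}$, which vanishes iff $T^n_k(z)$ does.

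The main obstacle is the careful exponent bookkeeping $u^{2p^i}=u^{2p^{i_0}}$ and the recognition that the entire dichotomy ($b=0$ versus $i=sk-1$) is governed by the single congruence $2p^{i_0}\equiv1\pmod{p^k-1}$; once that is in place, the characteristic-two counting and the odd-characteristic parity argument are both immediate. A secondary care point is justifying the trace simplification $T^n_k(z^{p^i})=(T^n_k(z))^{p^{i_0}}$ via invariance of $T^n_k$ under the $p^k$-power Frobenius, so that $b=0$ is genuinely equivalent to $T^n_k(\gamma^{1+p^{lk}})=0$ in the final specialisation.
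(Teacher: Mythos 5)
Your proposal is correct and follows essentially the same route as the paper: both reduce via the expansion (\ref{eq:bin}) of Lemma~\ref{le:lt} to the dichotomy $b=0$ or $u^{2p^i}=u$ on $\FF_{p^k}$, settle the latter by the congruence $2p^{i}\equiv 1\pmod{p^k-1}$ (giving $i=sk-1$ for $p=2$ and a parity contradiction for odd $p$), and then specialise to $\beta\in\FF_{p^k}$, $\gamma=1$ exactly as in the paper. Your reduction of the exponent modulo $k$ and the explicit Frobenius-invariance remark for $T^n_k$ are harmless refinements of the same argument.
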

\begin{proof}
	If (\ref{eq:quadtrace}) is satisfied then, from (\ref{eq:bin}),
	
	\begin{eqnarray*}
	f(x+u\gamma)-f(x) &=& u^{2p^i}T^n_k \left(\beta \gamma^{p^i+p^{i+lk}}\right).
	\end{eqnarray*}
	For $f$ to have linear translators, we  either have $u^{2p^i}=u$ or 
	$T^n_k(\beta \gamma^{p^i+p^{i+lk}})=0$. 
	
 $i)$ Let $p=2$. The condition $u^{2p^i}=u$ gives $2^{i+1}\equiv 1 \pmod{2^k-1}$,
 which implies $i=sk-1$, for some $0<s\leq r$. This follows from the fact that $2^k-1 \mid 2^{i+1}-1$ if and only if $k \mid i+1$. Otherwise, if $T^n_k(\beta \gamma^{2^i+2^{i+lk}})=0$ then $\gamma$ is a $0$-translator. 
 
  In particular, if $\beta \in \FF_{2^k}$ then $\gamma=1$ is a solution
 to (\ref{eq:quadtrace}).  Then,  
$$b=\beta T^n_k(\gamma^{2^i+2^{i+lk}})=\beta T^n_k(1)=0$$ if $r$ is even and
 $b=\beta$ for odd $r$ where additionally $i=sk-1$ as above.	

	$ii)$ For $p >2$ we have $2p^{i}\equiv 1 \pmod{p^k-1}$, which implies 
  $2p^i=1+s(p^k-1)$, for some $s$.
	Since $p$ is odd the left-hand side of the equation
 is even and the right-hand side is odd,
	which is impossible. The only remaining option  for $\gamma$ is
 to be a $0$-translator. 
 
 In particular, if $\beta \in \FF^*_{p^k}$, then by Lemma~\ref{le:lt}, $\frac{n}{\gcd(n,2kl)}=\frac{r}{\gcd(r,2l)}$ is even and thus $n$ must be even. Furthermore, (\ref{eq:quadtrace}) reduces to $\gamma^{p^{2kl}-1}=-1$ and the fact that $b=0$ implies 
\[
T^n_k(\beta \gamma^{p^i+p^{i+lk}})=\beta 
\left(T^n_k(\gamma^{1+p^{lk}})\right)^{p^i}=0.
\]
	
\end{proof}
\begin{remark}\label{open:pr1} The existence of translators for
 $f(x)=Tr_k^n(\beta x^{p^i+p^j})$ is more easily handled when
 $\beta \in \FF_{p^k}$. For $\beta \in \FF_{p^n}$    general solutions
 to (\ref{eq:quadtrace}) satisfying at the same time the other conditions
 seem to be difficult to specify explicitly. 
Theorem~\ref{th:lt} may also induce some non-existence results as well,
 which however requires further analysis. 
\end{remark}

The next corollary follows directly from Theorem \ref{th:main} and 
\ref{th:lt}.  

\begin{corollary}\label{cr:quad}
Let $p=2$, $n=rk$, $f(x)=T^n_k(\beta x^{p^{sk-1}+p^{(s+l)k-1}})$ for
 some $0<l<r$, $0<s\leq r$, and let
	$\gamma$ satisfy (\ref{eq:quadtrace}) in Lemma \ref{le:lt}.
 Then $$L(x)+L(\gamma )h\left( T^n_k(\beta x^{p^{sk-1}+p^{(s+l)k-1}})\right),$$
 where $L$ is a
	$\FF_{p^k}$-linear permutation on $\FF_{p^n}$ and
 $h:\FF_{p^k}\rightarrow\FF_{p^k}$, is a permutation if and
	only if $g:u\mapsto u+T^n_k ( \beta \gamma^{p^{sk-1}+p^{(s+l)k-1}})h(u)$
 permutes $\FF_{p^k}$.
\end{corollary}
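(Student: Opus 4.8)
The plan is to obtain Corollary~\ref{cr:quad} as an immediate specialization of the two results it cites, so almost all of the work has already been done upstream; my job is mainly to verify that the hypotheses line up correctly. First I would invoke Theorem~\ref{th:lt}(i): the hypotheses of the corollary fix $p=2$, set $i=sk-1$ (which is precisely the condition $i=sk-1$ for some $0<s\le r$ appearing in part (i)), and take $j=(s+l)k-1=i+lk$, so that the exponent $p^i+p^j$ of the trace monomial $f(x)=T^n_k(\beta x^{p^{sk-1}+p^{(s+l)k-1}})$ is exactly of the quadratic form treated in Theorem~\ref{th:lt}. Since $\gamma$ is assumed to satisfy (\ref{eq:quadtrace}), Theorem~\ref{th:lt}(i) guarantees that $\gamma$ is a $b$-linear translator of $f$ with $b=T^n_k(\beta\gamma^{p^{sk-1}+p^{(s+l)k-1}})$.

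Once $f$ is known to possess a $b$-translator $\gamma$ with this explicit value of $b$, I would simply apply Theorem~\ref{th:main} with this particular $f$, $\gamma$, and $b$. The theorem requires $L$ to be a $\FF_{p^k}$-linear permutation on $\FF_{p^n}$, $h:\FF_{p^k}\to\FF_{p^k}$, $n=rk$ with $r,k>1$, and $f$ mapping \emph{onto} $\FF_{p^k}$; these are exactly the ambient hypotheses under which the corollary is stated. Its conclusion then reads that $F(x)=L(x)+L(\gamma)h(f(x))$ permutes $\FF_{p^n}$ if and only if $g:u\mapsto u+bh(u)$ permutes $\FF_{p^k}$. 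Substituting the explicit $b=T^n_k(\beta\gamma^{p^{sk-1}+p^{(s+l)k-1}})$ yields precisely the stated permutation $F$ and the stated condition on $g$, completing the argument.

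The only point needing a moment's care is the surjectivity of $f$ onto $\FF_{p^k}$, which Theorem~\ref{th:main} formally demands; for a nonzero trace monomial of this shape this holds automatically, and in any case a $b$-translator with $b\neq 0$ forces $f$ to be surjective. The genuinely nontrivial content lives entirely in Theorem~\ref{th:lt} (the translator computation) and Theorem~\ref{th:main} (the AGW-style permutation criterion), both already established, so I do not expect any real obstacle here---the proof is a one-line chaining of the two cited theorems. I would therefore write it as: \emph{The exponents satisfy the form required by Theorem~\ref{th:lt}(i) with $i=sk-1$ and $j=i+lk$, so $\gamma$ is a $b$-translator of $f$ with $b=T^n_k(\beta\gamma^{p^{sk-1}+p^{(s+l)k-1}})$; applying Theorem~\ref{th:main} to this $f$, $\gamma$, $b$ gives the claim.}
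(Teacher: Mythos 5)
Your proposal is correct and matches the paper exactly: the paper gives no separate proof, stating only that the corollary ``follows directly from Theorem~\ref{th:main} and \ref{th:lt}'', and your chaining of Theorem~\ref{th:lt}(i) (with $i=sk-1$, $j=i+lk$, giving the $b$-translator $\gamma$ with $b=T^n_k(\beta\gamma^{p^{sk-1}+p^{(s+l)k-1}})$) into Theorem~\ref{th:main} is precisely that argument. Your side remark on surjectivity of $f$ is a reasonable extra check that the paper itself does not bother with.
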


\section{Compositional  inverses}\label{se:inv}
The main goal of this paper is to show that a lot of permutations,
and some related structures
can be derived from Theorem \ref{th:main}. 
In this section, we focus on the compositional inverses of these permutations. 
A similar initiative was taken in \cite{TuxaWang} where other classes of permutations (not of the form (\ref{eq:specclass})) were analyzed with respect to their inverses. Related  to compositional inverses of permutations of the form (\ref{eq:specclass}), we mention  Corollary 3.8 in \cite{TuxaWang} which states that given $\gcd (n,k)=d>1,
s(q^k-1) \equiv 0 \mod (q^n-1), \delta \in \mathbb{F}_{q^n}$, the function $f(x) = x + (x^{q^k}-x+\delta)^s$ permutes
$\mathbb{F}_{q^n}$ and its inverse is $f^{-1}(x)=x-(x^{q^k}-x+\delta)^s$.

\begin{definition}
Let $F$ be any function  over  $\FF_{p^n}$. For any $t\geq 1$, the function
$$F_t(x)=\underbrace{F\circ \dots \circ F}_{t}(x)$$ is said to be
the {\it t-fold composition} of $F$ with itself.
\end{definition}
In \cite[Section 4]{kyu}, the author studied the functions
$F:x\mapsto x+\gamma f(x)$, {\it i.e.,} with notation of Definition \ref{de:tr},
the function $h$ being the identity. Several results in \cite{kyu}, regarding the compositional inverses,
hold for such $F$ (only). Henceforth,  we attempt 
to specify compositional inverses when $h$ is not the identity.  

\begin{lemma}\label{le:inv1}
Let $n=rk$, $k>1$. 
Let $f:\FF_{p^n}\rightarrow \FF_{p^k}$, 
$h:\FF_{p^k}\rightarrow \FF_{p^k}$
and $b\in\FF_{p^k}$. Define
$$F(x)=x+\gamma h(f(x)),~\gamma\in\FF_{p^n}^*$$
where $\gamma$ is a $b$-linear translator of $f$.
Then 
\[
F_2 (x)=x+\gamma h(f(x))+\gamma h\left(bh(f(x))+f(x)\right).
\]
\end{lemma}
\begin{proof}
\begin{eqnarray*}
F\circ F(x) &=& F\left(x+\gamma h(f(x))\right)\\
&=&  x+\gamma h(f(x))+\gamma h\left(f(x+\gamma h(f(x)))\right)\\
&=&  x+\gamma h(f(x))+\gamma h\left(bh(f(x))+f(x)\right),
\end{eqnarray*}
since $f(x+\gamma h(f(x)))=b h(f(x))+f(x)$ for all $x$.
\end{proof}
\begin{proposition}\label{pr:inv1}
Notation is as in Lemma \ref{le:inv1}.
If $b=0$ then $F_p(x)=x$ so that 
\[
F^{-1}(x)=F_{p-1}(x)=x+(p-1)\gamma h(f(x)).
\] 
In particular, $F$ is an involution when $p=2$.
\end{proposition}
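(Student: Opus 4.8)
The plan is to exploit the fact that when $b=0$ the function $f$ is constant along the orbits of $F$, which reduces the whole iteration to repeated additions of the same shift. With $b=0$, the defining identity of Definition~\ref{de:tr} reads $f(x+u\gamma)=f(x)$ for all $x\in\FF_{p^n}$ and all $u\in\FF_{p^k}$. Since $h(f(x))\in\FF_{p^k}$, taking $u=h(f(x))$ yields $f(F(x))=f\bigl(x+\gamma h(f(x))\bigr)=f(x)$; that is, $f\circ F=f$. This is the single observation that drives everything.

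From this invariance I would prove by induction on $t$ the closed form
\[
F_t(x)=x+t\,\gamma\, h(f(x)).
\]
The base case $t=1$ is the definition of $F$, and $t=2$ is precisely Lemma~\ref{le:inv1} specialized to $b=0$ (the term $h(bh(f(x))+f(x))$ collapses to $h(f(x))$). For the inductive step, assume $F_t(x)=x+t\gamma h(f(x))$. Because $t\,h(f(x))\in\FF_{p^k}$, the translator property again gives $f(F_t(x))=f(x)$, whence
\[
F_{t+1}(x)=F_t(x)+\gamma h(f(F_t(x)))=x+t\gamma h(f(x))+\gamma h(f(x))=x+(t+1)\gamma h(f(x)),
\]
which closes the induction.

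Evaluating at $t=p$ and using that $\FF_{p^n}$ has characteristic $p$ gives $F_p(x)=x+p\gamma h(f(x))=x$, so $F_p=\mathrm{id}$. In particular $F$ is a bijection with $F^{-1}=F_{p-1}$, and substituting $t=p-1$ in the closed form yields $F^{-1}(x)=x+(p-1)\gamma h(f(x))$. When $p=2$ this is $F^{-1}=F_1=F$, so $F$ is an involution.

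I do not expect a genuine obstacle here; the only point demanding care is the repeated appeal to the translator property at each stage of the induction to certify that the value $f$ is unchanged after each application of $F$. That is exactly what allows the identical shifts $\gamma h(f(x))$ to accumulate additively rather than feeding a changing argument into $h$. Once $f\circ F=f$ is in hand, the rest is addition in characteristic $p$.
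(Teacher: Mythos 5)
Your proof is correct and follows essentially the same route as the paper: the same induction $F_t(x)=x+t\gamma h(f(x))$, justified at each step by the $0$-translator property $f(x+u\gamma)=f(x)$ for $u\in\FF_{p^k}$, followed by reduction modulo the characteristic at $t=p$. The only cosmetic difference is that you deduce bijectivity of $F$ directly from $F_p=\mathrm{id}$, whereas the paper first invokes Theorem~\ref{th:main} to assert $F$ is a permutation; both are fine.
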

\begin{proof}
Assume that $b=0$. In this case, $F$ is a permutation for any 
$h$ (from Theorem  \ref{th:main}), 
so that its  compositional  inverse $F^{-1}$  exists. We get 
from  Lemma \ref{le:inv1}:
\[
F\circ F(x)=x+2~\gamma h(f(x)).
\]
Assume that $F_{j-1}(x)= x+(j-1)\gamma h(f(x))$.  We have   for $2<j\leq p$:
\begin{eqnarray*}
F_j(x)  &=& F\circ F_{j-1}(x)=F_{j-1}(x)+\gamma h(f(F_{j-1}(x)))\\
  &=& x+(j-1)\gamma h(f(x))+\gamma h(f(x))=x+j\gamma h(f(x)),
\end{eqnarray*}
since $f(x+(j-1)\gamma h(f(x)))=f(x)$, for all $x$.
Thus we get $F_p(x) =x$, for all $x$, for $j=p$. Moreover if $p=2$
then $F^{-1}=F$.
\end{proof}

Thus, according to Proposition \ref{pr:inv1}  a large set of permutations
can be obtained whose compositional inverse is known as illustrated
below.
\begin{corollary}\label{cr:inv}
 Let $f:\FF_{p^n} \rightarrow \FF_{p^k}$, $n=rk$, $f(x)=T^n_k(\beta x)$. Choose
$\beta ,\gamma\in\FF_{p^n}^*$ such that $T^n_k(\beta\gamma)=0$.
Let $L$ be any $\FF_{p^k}$-linear permutation.
Then the functions
\[
F(x)=L(x)+L(\gamma) h\left(T^n_k(\beta x)\right)
\]
are  permutations for any $h:\FF_{p^k}\rightarrow \FF_{p^k}$.
Moreover 
\[
F^{-1}(x)=L^{-1}(x)+(p-1)L(\gamma) h\left(T^n_k(\beta(L^{-1}(x)))\right).
\]
If $p=2$ and $L(x)=x$, then $F$ is an involution, {\it i.e.,} $F^{-1}=F$.
\end{corollary}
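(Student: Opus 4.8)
The plan is to derive everything from the two results already in hand, Theorem~\ref{th:main} and Proposition~\ref{pr:inv1}, after observing that $f(x)=T^n_k(\beta x)$ is precisely the affine map of Lemma~\ref{le:lin1} with constant term $a=0$. First I would settle bijectivity. By Lemma~\ref{le:lin1}, every $\gamma\in\FF_{p^n}^*$ is a $b$-linear translator of $f$ with $b=T^n_k(\beta\gamma)$, so the hypothesis $T^n_k(\beta\gamma)=0$ makes $\gamma$ a $0$-translator. With $b=0$ the companion map $g:u\mapsto u+bh(u)=u$ is the identity on $\FF_{p^k}$, hence trivially a permutation, and Theorem~\ref{th:main} immediately gives that $F(x)=L(x)+L(\gamma)h(f(x))$ permutes $\FF_{p^n}$ for \emph{every} $h$, with no constraint on $h$ whatsoever.

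The key step for the inverse is to factor $F$ through the simpler map of Proposition~\ref{pr:inv1}. Since $h(f(x))\in\FF_{p^k}$ and $L$ is $\FF_{p^k}$-linear, the scalar $h(f(x))$ can be pulled inside $L$, giving
\[
F(x)=L(x)+L(\gamma)h(f(x))=L\bigl(x+\gamma h(f(x))\bigr)=(L\circ G)(x),\qquad G(x):=x+\gamma h(f(x)).
\]
Now $G$ is exactly the function treated in Lemma~\ref{le:inv1} and Proposition~\ref{pr:inv1} for the same $0$-translator $\gamma$, so that proposition yields $G_p=\mathrm{id}$ and $G^{-1}(x)=x+(p-1)\gamma h(f(x))$. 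Composing, $F^{-1}=G^{-1}\circ L^{-1}$, that is,
\[
F^{-1}(x)=L^{-1}(x)+(p-1)\gamma\,h\bigl(T^n_k(\beta L^{-1}(x))\bigr),
\]
which is the compositional inverse of $F$. The involution claim is then immediate: when $p=2$ and $L=\mathrm{id}$ one has $p-1=1$ and $L(\gamma)=\gamma$, so $F^{-1}=F$.

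I expect the only delicate point to be the factorization itself, where one must use both the $\FF_{p^k}$-linearity of $L$ and the fact that the output of $h\circ f$ lies in $\FF_{p^k}$; this is exactly what decouples the linear permutation $L$ from the translator part $G$. As a direct check of the formula (logically already subsumed by Proposition~\ref{pr:inv1}) I would set $w=L^{-1}(x)$ and $c=h(T^n_k(\beta w))\in\FF_{p^k}$ and evaluate $F$ at the proposed inverse: the argument of $f$ is shifted by $T^n_k\bigl(\beta\,(p-1)\gamma c\bigr)=(p-1)c\,T^n_k(\beta\gamma)=0$, so $c$ is unchanged, and the two correction terms then sum to $p\,c\,L(\gamma)=0$ in characteristic $p$, returning $x$. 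This single computation confirms both the permutation property and the inverse, the hypothesis $T^n_k(\beta\gamma)=0$ being precisely what keeps the whole translator orbit inside a single fibre of $f$.
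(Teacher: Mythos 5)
Your proof is correct and follows essentially the same route as the paper: Lemma~\ref{le:lin1} gives the $0$-translator, Theorem~\ref{th:main} gives bijectivity for every $h$, and the factorization $F=L\circ G$ with $G(x)=x+\gamma h(f(x))$ combined with Proposition~\ref{pr:inv1} yields the inverse via $F^{-1}=G^{-1}\circ L^{-1}$. One remark: the formula you derive, $F^{-1}(x)=L^{-1}(x)+(p-1)\gamma\,h\left(T^n_k(\beta L^{-1}(x))\right)$, has $\gamma$ where the corollary as stated has $L(\gamma)$; yours is the correct version (the paper's own proof produces exactly the same expression with $\gamma$, so the $L(\gamma)$ in the statement is a typo), and your concluding direct verification, which uses $T^n_k(\beta\gamma)=0$ to keep the argument of $h$ fixed, confirms this.
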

\begin{proof}
From Lemma \ref{le:lin1}, $\gamma$ is a $0$-translator of $f$ if and only if
 $T^n_k(\beta\gamma)=0$. So, from Theorem \ref{th:main}, 
 $F$ is a permutation for any $\FF_{p^k}$-linear 
permutation $L$ and  for any $h$. Further, set $G(x)=x+\gamma h(f(x))$
so that $F=L\circ G$. Then $F^{-1}=G^{-1}\circ L^{-1}$, where,
from Proposition \ref{pr:inv1},
$$G^{-1}(x)=G_{p-1}(x)=x+(p-1)\gamma h\left(T^n_k(\beta x)\right).$$  
Moreover if  $p=2$ and $L(x)=x$,  then 
$F^{-1}(x)=G^{-1}(x)$ with $G^{-1}(x)=G(x)$.
\end{proof}

Taking $h$ linear we get a large set of  linear permutations. We illustrate this
in the binary case when $r=2$.
\begin{corollary}\label{cr:inv1}
Notation is as in Corollary \ref{cr:inv} with $n=2k$ and $p=2$. Assume that 
$L$ is a $\FF_{p^k}$-linear involution, {\it i.e.,}
$L(x)=ax+bx^{2^k}$ as defined by Lemma \ref{le:lin}. Then, for all 
$\beta ,\gamma\in\FF_{p^n}^*$ such that $T^n_k(\beta\gamma)=0$
and for any linear function $h$ the functions
\[
F(x)=L(x)+L(\gamma) h\left(T^n_k(\beta x)\right),
\]
are linear permutations of  $\FF_{p^n}$ and
\[
F^{-1}(x)=L(x)+L(\gamma) h\left(T^n_k(\beta(L(x)))\right).
\]
\end{corollary}
Note that for $p=3$ the compositional inverse is obtained by adding 
to $F$ its second term, as  shown in the example below.
\begin{example}
Let $p=3$, $n=3k$ and $a\in\FF_{3^k}$. 
\[
F(x)=x+\gamma(x^{3^{2k}}+x^{3^k}+x+a)^s,~T^{3k}_k(\gamma)=0.
\] 
Then, by applying Corollary \ref{cr:inv}, 
 $F$ is a permutation of $\FF_{3^n}$ for any integer $s$ in the
range $[1, 3^n-2]$.
Moreover 
\[
F^{-1}=x+2\gamma\left(x^{3^{2k}}+x^{3^k}+x+a\right)^s=F(x)+
\gamma\left(T^n_k(x)+a\right)^s.
\]
\end{example}

In Section \ref{se:trans}, it was proved that a function  
$f:\FF_{p^n}\rightarrow \FF_{p^k}$, $p$ odd, defined by 
$f(x)=T^n_k(x^{p^i+p^{i+\ell k}})$, 
can have a $b$-translator
for $b=0$ only (see Theorem \ref{th:lt}). Based on this, we are able to derive 
a class of permutations of degree at least $2$ whose compositional inverse
is known.
\begin{corollary}\label{cor:4}
Let $p$ be an odd prime, $n=rk$ and $\ell$ be a positive integer 
such that $r/\gcd(r,2\ell)$ is even. Let $f(x)=T^n_k(x^{p^i+p^{i+\ell k}})$, 
where  $0\leq i\leq k-1$. Let $\gamma\in\FF^*_{p^n}$ such that
\[
\gamma^{p^{2k\ell}-1}=-1~~\mbox{and}~~T^n_k(\gamma^{1+p^{\ell k}})=0.
\] 
Then 
\[
x\mapsto L(x)+L(\gamma )h\left(T^n_k(\beta x^{p^i+p^{i+\ell k}})\right)
\] 
is a permutation of $\FF_{p^n}$, for 
any  $\FF_{p^k}$-linear permutation $L$  and  any
$h:\FF_{p^k}\rightarrow\FF_{p^k}$.
Moreover if $F(x)=x+\gamma h\left(T^n_k(x^{p^i+p^{i+\ell k}})\right)$
then 
\[F^{-1}(x)=x+\gamma (p-1) h\left(T^n_k(x^{p^i+p^{i+\ell k}})\right).\]
\end{corollary}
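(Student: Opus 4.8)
The plan is to verify the two claims separately, both of which follow almost immediately from results already established in the excerpt. The content of Corollary~\ref{cor:4} is that the hypotheses on $\gamma$ are exactly the conditions under which $\gamma$ is a $b$-translator of $f$ with $b=0$, and then to feed this into the general machinery of Theorem~\ref{th:main} (for the permutation statement) and Proposition~\ref{pr:inv1} (for the inverse statement).

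First I would establish that $\gamma$ is a $0$-translator of $f(x)=T^n_k(x^{p^i+p^{i+\ell k}})$. Here $\beta=1\in\FF_{p^k}$, so we are in the special case covered by part $ii)$ of Theorem~\ref{th:lt}. The hypothesis $r/\gcd(r,2\ell)$ even guarantees, via Lemma~\ref{le:lt}, that a $\gamma$ with $\gamma^{p^{2k\ell}-1}=-1$ exists and makes the derivative independent of $x$, i.e. condition~(\ref{eq:quadtrace}) holds. Given that, Theorem~\ref{th:lt}$ii)$ tells us that in odd characteristic we necessarily have $b=0$, and the residual condition it records, $T^n_k(\gamma^{1+p^{\ell k}})=0$, is precisely the second hypothesis imposed on $\gamma$. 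Thus under the stated conditions $\gamma$ is a genuine $0$-translator of $f$, which is the single fact that drives everything else.

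Next, for the permutation claim, I would apply Theorem~\ref{th:main} with this $f$, the given $\FF_{p^k}$-linear permutation $L$, and arbitrary $h$. Since $b=0$, the auxiliary map $g:u\mapsto u+bh(u)=u$ is the identity, which trivially permutes $\FF_{p^k}$; hence Theorem~\ref{th:main} yields that $L(x)+L(\gamma)h\!\left(T^n_k(\beta x^{p^i+p^{i+\ell k}})\right)$ permutes $\FF_{p^n}$ for every choice of $h$. (One should note that the statement writes $\beta x^{p^i+p^{i+\ell k}}$ inside the trace, matching the $\beta\in\FF^*_{p^k}$ setting of Theorem~\ref{th:lt}$ii)$; taking $\beta=1$ recovers the $f$ analyzed above, and the argument is unaffected for general $\beta\in\FF_{p^k}^*$.) For the inverse claim, I would specialize to $L=\mathrm{id}$ and $\gamma$ a $0$-translator, so $F(x)=x+\gamma h(f(x))$ falls exactly under the hypotheses of Proposition~\ref{pr:inv1} with $b=0$; that proposition gives $F_p=\mathrm{id}$ and hence $F^{-1}=F_{p-1}(x)=x+(p-1)\gamma h(f(x))$, which is the displayed formula.

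I do not expect a serious obstacle here, since the corollary is essentially a packaging of Theorem~\ref{th:lt}, Theorem~\ref{th:main}, and Proposition~\ref{pr:inv1}. The only point requiring a little care is the bookkeeping in the first step: one must confirm that the \emph{existence} of a $\gamma$ with $\gamma^{p^{2k\ell}-1}=-1$ (guaranteed by parity of $r/\gcd(r,2\ell)$) is compatible with the additional requirement $T^n_k(\gamma^{1+p^{\ell k}})=0$, i.e. that the hypotheses are not vacuous; but this compatibility is exactly what Theorem~\ref{th:lt}$ii)$ asserts in the $\beta\in\FF_{p^k}^*$ case, so invoking that result settles it. Everything else is a direct substitution into previously proved statements.
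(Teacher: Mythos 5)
Your proposal is correct and follows essentially the same route as the paper: the authors likewise invoke Theorem~\ref{th:lt} to conclude that the stated conditions on $\gamma$ make it a $0$-linear translator of $f$, and then apply Theorem~\ref{th:main} and Proposition~\ref{pr:inv1} for the permutation and inverse claims respectively. Your extra remark about the $\beta$ appearing in the displayed map (versus $\beta=1$ in the definition of $f$) is a reasonable reading of a small notational inconsistency in the corollary's statement, not a deviation from the paper's argument.
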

\begin{proof}
From Theorem \ref{th:lt}, $\gamma$ is a  $0$-linear translator  of $f$ 
 if and only if  $T^n_k(\gamma^{1+p^{\ell k}})=0$.
Further, we apply Theorem \ref{th:main} and Proposition \ref{pr:inv1}.
\end{proof}
We previously considered functions with a zero translator, {\it i.e.,  $b=0$},
to obtain permutations with their compositional inverses.
When $b\ne 0$, other permutations with their compositional inverses 
can be obtained.
In this case however, it seems that the definition of the function
 $h$ has to be specified.
The idea is to determine $h$ such that 
\[
h(f(x))+h(bh(f(x))+f(x))=g(x), ~b\ne 0,
\]
(by using Lemma \ref{le:inv1})  
where $g$ allows us to compute easily the $t$-fold composition of
 $F$ with itself.
We illustrate our  purpose by constructing involutions for any odd $p$.
\begin{proposition}\label{pr:inv2}
Notation is as in Lemma \ref{le:inv1}.   Let $p$ be an odd prime. 
Assume that $\gamma$ is a $b$-linear translator of $f$
where $b\ne 0$.  Set $h(x)=\lambda x$ where $\lambda \in\FF^*_{p^k}$ and 
$\lambda\ne -b^{-1}$. Then  the function $F$,
\[
F(x)=x+\gamma \lambda f(x),
\]
permutes $\FF^*_{p^n}$. Moreover, if $\lambda=-2b^{-1}$  then $F$ is 
an involution.
\end{proposition}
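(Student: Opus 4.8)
The plan is to verify the permutation property via Theorem~\ref{th:main} and then to compute the relevant fold compositions directly using Lemma~\ref{le:inv1}. First I would observe that $F(x)=x+\gamma\lambda f(x)$ is exactly the map of type (\ref{eq:form}) with $L=\mathrm{id}$ and $h(x)=\lambda x$; by Theorem~\ref{th:main}, $F$ permutes $\FF_{p^n}$ if and only if $g:u\mapsto u+b\lambda u=(1+b\lambda)u$ permutes $\FF_{p^k}$. Since $\FF_{p^k}$ is a field and $\lambda\neq -b^{-1}$ guarantees $1+b\lambda\neq 0$, the map $g$ is multiplication by a nonzero constant, hence a bijection. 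This settles the first assertion.

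For the involution claim, the key step is to specialize Lemma~\ref{le:inv1}. With $h(x)=\lambda x$ the lemma yields
\[
F_2(x)=x+\gamma\lambda f(x)+\gamma\lambda\bigl(b\lambda f(x)+f(x)\bigr)
      =x+\gamma\lambda f(x)\bigl(2+b\lambda\bigr).
\]
Thus $F_2(x)=x$ for all $x$ precisely when $\gamma\lambda(2+b\lambda)f(x)=0$ identically; since $\gamma\neq 0$, $\lambda\neq 0$ and $f$ is not the zero function, this forces $2+b\lambda=0$, i.e. $\lambda=-2b^{-1}$. Note that $p$ odd ensures $2$ is invertible so this value of $\lambda$ is well defined, and one should check it satisfies the admissibility hypothesis $\lambda\neq -b^{-1}$: indeed $-2b^{-1}=-b^{-1}$ would give $b^{-1}=0$, which is impossible. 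Hence for $\lambda=-2b^{-1}$ we get $F_2=\mathrm{id}$, so $F$ is an involution.

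The computation is almost entirely routine once Lemma~\ref{le:inv1} is invoked, so there is no serious obstacle here; the only point requiring mild care is the reduction ``$F_2=\mathrm{id}$ forces $2+b\lambda=0$,'' which uses that $f\not\equiv 0$ (guaranteed because $f$ maps \emph{onto} $\FF_{p^k}$ in the ambient setup). I would also double-check that the value $\lambda=-2b^{-1}$ remains in $\FF_{p^k}^*$, which is clear since $b\in\FF_{p^k}^*$ and $2\in\FF_{p^k}^*$ for odd $p$. An alternative to invoking Lemma~\ref{le:inv1} would be to compute $F\circ F$ from scratch using the translator identity $f(x+u\gamma)=f(x)+ub$ with $u=\lambda f(x)$, arriving at the same expression $F_2(x)=x+\gamma\lambda(2+b\lambda)f(x)$; this is essentially the proof of the lemma specialized to linear $h$, and gives an independent sanity check.
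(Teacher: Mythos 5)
Your proof is correct and follows essentially the same route as the paper: Theorem~\ref{th:main} gives the permutation property because $u\mapsto (1+b\lambda)u$ is bijective when $\lambda\neq -b^{-1}$, and Lemma~\ref{le:inv1} specialized to $h(x)=\lambda x$ gives $F_2(x)=x+\gamma\lambda(2+b\lambda)f(x)$, which is the identity exactly when $\lambda=-2b^{-1}$. The extra checks you add (that $-2b^{-1}\neq -b^{-1}$ and that $f\not\equiv 0$, the latter forced by $b\neq 0$) are sound but not part of the paper's shorter argument.
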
 
\begin{proof}
From Theorem \ref{th:main}, $F$ is a permutation, since 
\[
\ell(u)=u+bh(u)=u(1+\lambda b)~~\mbox{for  $u\in\FF_{p^k}$}; 
\]
so $\ell$ is a permutation because $\lambda \ne -b^{-1}$ by hypothesis.
Moreover
\[
h(f(x))+h(bh(f(x))+f(x))=2\lambda f(x)+b\lambda^2 f(x)=
\lambda f(x)(2+b\lambda).
\]
From Lemma \ref{le:inv1}, we get $F\circ F(x)=x$  if and only if $2+b\lambda=0$.
\end{proof}
\section{Relation with complete permutations}\label{se:cplte}
The concept of complete permutations is of crucial importance for non-zero linear translators $b$ in terms of Theorem~\ref{th:main}, since the main condition there was that $u\mapsto u+bh(u)$  
permutes $\FF_{p^k}$.  
\begin{definition}
Let $h$ be a function over $\FF_{p^k}$. We say that $h$ is complete
with respect to $b$, or {\it b-complete}, when both $h$ and $u\mapsto u+bh(u)$  
permute $\FF_{p^k}$.
\end{definition}
Thus we can apply Theorem \ref{th:main} as follows:
\begin{theorem}\label{th:cplte}
Let $n=rk$, $k>1$. 
Let $f:\FF_{p^n}\rightarrow \FF_{p^k}$, 
$h:\FF_{p^k}\rightarrow \FF_{p^k}$, $\gamma\in\FF_{p^n}^*$
and $b\in\FF^*_{p^k}$ such that $\gamma$ is a $b$-linear translator of $f$.
 Let $L$ be a $\FF_{p^k}$-linear permutation on $\FF_{p^n}$.

If  $h$ is $b$-complete
then $F(x)=L(x)+L(\gamma)h(f(x))$ permutes $\FF_{p^n}$.
\end{theorem}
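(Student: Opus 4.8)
The plan is to derive Theorem~\ref{th:cplte} directly from Theorem~\ref{th:main}, treating the $b$-completeness hypothesis as precisely the ingredient needed to verify the "only if" side of the characterization there. The key observation is that Theorem~\ref{th:main} gives a clean equivalence: $F(x)=L(x)+L(\gamma)h(f(x))$ permutes $\FF_{p^n}$ if and only if $g:u\mapsto u+bh(u)$ permutes $\FF_{p^k}$. Since all the hypotheses of Theorem~\ref{th:main} are already part of the hypotheses of Theorem~\ref{th:cplte} (namely $n=rk$ with $k>1$, $L$ a $\FF_{p^k}$-linear permutation, $f$ mapping onto $\FF_{p^k}$, $\gamma\in\FF_{p^n}^*$ a $b$-linear translator of $f$, and $b$ fixed in $\FF_{p^k}$), the only thing I need to supply is a verification that $g$ permutes $\FF_{p^k}$.

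First I would invoke the definition of $b$-completeness: $h$ is $b$-complete means that both $h$ and $u\mapsto u+bh(u)$ permute $\FF_{p^k}$. The second of these two conditions is exactly the map $g$ appearing in Theorem~\ref{th:main}. Hence the completeness hypothesis immediately hands me the permutation property of $g$ as one of its two constituent assertions. Then I would simply apply Theorem~\ref{th:main} with this $g$, concluding that $F$ permutes $\FF_{p^n}$. In effect, $b$-completeness is a (strictly stronger than necessary) sufficient condition that guarantees the hypothesis of Theorem~\ref{th:main}.

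There is essentially no obstacle here, and this is by design: the statement is meant to repackage Theorem~\ref{th:main} in the language of complete mappings so as to connect the construction with the existing literature on complete permutations surveyed in Section~\ref{se:cplte}. The only subtlety worth flagging is that $b$-completeness requires $h$ itself to be a permutation, whereas Theorem~\ref{th:main} does not; so the converse of Theorem~\ref{th:cplte} does not hold, and the bijectivity of $h$ is a spare assumption not used in the argument. This is consistent with the remark in the introduction that $h$ need not be bijective to apply Theorem~\ref{th:main}, and therefore $g$ need not be a complete permutation. I would therefore keep the proof to a single short paragraph: state that $b$-completeness yields that $g(u)=u+bh(u)$ permutes $\FF_{p^k}$, and invoke Theorem~\ref{th:main} to conclude.
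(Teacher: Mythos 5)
Your proof is correct and matches the paper's own argument: both simply unpack the definition of $b$-completeness to obtain that $u\mapsto u+bh(u)$ permutes $\FF_{p^k}$ and then invoke Theorem~\ref{th:main}. Your observation that the bijectivity of $h$ itself is a spare hypothesis is accurate and consistent with the paper's discussion.
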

\begin{proof}
To say that  $h$ is $b$-complete is to say that  both $h$ and 
$u\mapsto u+bh(u)$  permute $\FF_{p^k}$. We apply Theorem \ref{th:main}
assuming that $h$ is a permutation.
\end{proof}
The characterizarion of complete permutations, especially monomials,
is currently discussed in many works (see for instance 
\cite{BasZin,TuZeHu14,WuLiHeZh14} and  references). 
New permutations could be obtained
while $h$ is not bijective, as in the next example.

\begin{example}
 Let $p=3$ and $h$ be the function on $\FF_{p^3}$ defined by $h(x)=x^{p^2+p+2}$.
By \cite[Theorem 6]{BasZin} we know those $b\in\FF_{p^3}$ such that 
$u\mapsto u+bh(u)$  permutes $\FF_{p^3}$. Thus, we can apply
 Theorem \ref{th:main} for any $n=3r$ and for any such $b$.
Let $\gamma\in\FF_{p^n}$ and
\[
f~:~x\in\FF_{p^n}~\mapsto~x+x^{p^3}+\dots+ x^{p^{(r-1)3}}\in\FF_{p^3}.
\]
Then, for any $u\in\FF_{p^3}$
\[
f(x+u\gamma)-f(x)=T^{3r}_3(u\gamma)=uT^{3r}_3(\gamma).
\]
 Thus, we choose $\gamma$ such that $b=T^{3r}_3(\gamma)$ is suitable,
according to the results of \cite{BasZin}.
Then we obtain a new permutation $F$, for any $\FF_{p^3}$-linear permutation $L$.
In particular for $L(x)=x$:
\[
F(x)~:~x\mapsto x+\gamma \left(T^{3r}_3(x)\right)^{p^2+p+2}
\]
is a permutation of $\FF_{p^n}$.
Another example is $L(x)=ax+x^{p^3}$, where $a\in\FF_{p^n}$ and $-a$ are not in the
 image set of $x\mapsto x^{p^3-1}$. Then
\[
F(x)=ax+x^{p^3}+L(\gamma)\left(T^{3r}_3(x)\right)^{p^2+p+2}
\]
is a permutation over $\FF_{p^n}$.
\end{example}
A set of trinomials which are $1$-complete over $\FF_{2^{3m}}$ is proposed
in \cite[Theorem 4]{TuZeHu14}. We give here a slightly different version
of this result.
\begin{theorem}\label{th:cpp}
For any  $\nu\in\FF_{2^m}\setminus\{0,1\}$, the trinomial
\[
h(x)=x^{2^{2m}+1}+x^{2^m+1}+\nu x
\]
is complete over $\FF_{2^{3m}}$ with respect to any 
$b\in\FF_{2^m}\setminus\{0,\nu^{-1}\}$.
\end{theorem}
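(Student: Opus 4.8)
The plan is to reduce the two permutation requirements hidden in the word ``complete'' to a single statement about a one‑parameter family, and then to settle that family by a short injectivity argument based on the relative trace. Write $q=2^m$, so that $\FF_{2^{3m}}=\FF_{q^3}$ and the relative trace $T:=T^{3m}_m$ is $T(x)=x+x^q+x^{q^2}$. The starting observation is the factorisation
\[
h(x)=x\bigl(x^{q^2}+x^q+\nu\bigr),
\]
which suggests introducing, for $\mu\in\FF_q$, the family $h_\mu(x)=x\bigl(x^{q^2}+x^q+\mu\bigr)=x^{q^2+1}+x^{q+1}+\mu x$, so that $h=h_\nu$. For $b\ne 0$ one has the identity $u+bh(u)=b\,h_{\,b^{-1}+\nu}(u)$, and multiplication by the nonzero constant $b$ does not affect bijectivity. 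Since $\nu\in\FF_q^{*}$ and $b^{-1}+\nu\in\FF_q^{*}$ precisely because $b\ne\nu^{-1}$, \emph{both} required permutation properties follow at once from the single claim: $h_\mu$ permutes $\FF_{q^3}$ for every $\mu\in\FF_q^{*}$. (In fact this shows the hypothesis $\nu\ne 1$ is never used, only $\nu\ne 0$ and $b\notin\{0,\nu^{-1}\}$.)

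To prove the claim I would show $h_\mu$ is injective. Suppose $h_\mu(x)=h_\mu(y)$ and set $d=x+y$; expanding $h_\mu(x)+h_\mu(x+d)=0$ in characteristic $2$ and writing $L(z)=z^{q^2}+z^q$ collapses all the $x$‑quadratic cross terms to the single relation
\[
d\,L(x)+x\,L(d)+d\,L(d)+\mu d=0.
\]
Applying $T$ to this identity annihilates the first three summands: using that $T$ is invariant under the Frobenius $z\mapsto z^q$ one checks $T\bigl(d\,L(x)\bigr)=T\bigl(x\,L(d)\bigr)$ and $T\bigl(d\,L(d)\bigr)=0$, so what remains is $\mu\,T(d)=0$, whence $T(d)=0$, i.e.\ $d\in\ker T$.

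The decisive simplification is that $L$ acts as the identity on $\ker T$: if $T(d)=0$ then $d^{q}+d^{q^2}=d$, so $L(d)=d$. Substituting $L(d)=d$ and using $L(x)+x=T(x)$ turns the relation into $d\,\bigl[\,T(x)+d+\mu\,\bigr]=0$. Were $d\ne 0$, this would force $d=T(x)+\mu\in\FF_q$; but $\ker T\cap\FF_q=\{0\}$, because $T(c)=3c=c$ for $c\in\FF_q$ so that $c\in\ker T$ forces $c=0$. Hence $d=0$, i.e.\ $x=y$, and $h_\mu$ is a bijection. Feeding this back through the reduction gives that $h=h_\nu$ is $b$‑complete for every admissible $b$.

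I expect the only real difficulty to be bookkeeping rather than depth: first, deriving the collision identity and its trace cleanly, where the Frobenius‑invariance cancellations must be handled with care; and second, noticing the simplification $L|_{\ker T}=\mathrm{id}$, which is exactly what turns an a priori two‑dimensional $\FF_q$‑linear obstruction into the transparent scalar condition $d\in\FF_q\cap\ker T$. Everything else is routine once the family $h_\mu$ and the identity $u+bh(u)=b\,h_{b^{-1}+\nu}(u)$ are in place.
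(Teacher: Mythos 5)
Your proof is correct, and it takes a genuinely different route from the paper's. The paper's proof imports from the cited work of Tu, Zeng and Hu the fact that $h$ itself permutes $\FF_{2^{3m}}$, and then handles $u\mapsto u+bh(u)$ by observing that it is $b$ times a trinomial of the same shape with shifted linear coefficient, checking that the shifted parameter remains admissible. You instead prove from scratch that every member $h_\mu(x)=x^{q^2+1}+x^{q+1}+\mu x$ of the family with $\mu\in\FF_{2^m}^*$ permutes $\FF_{q^3}$ (where $q=2^m$), via the collision identity $dL(x)+xL(d)+dL(d)+\mu d=0$, the vanishing of the relative trace $T^{3m}_m$ on its first three terms, and the observation that $L$ acts as the identity on the kernel of $T^{3m}_m$; both halves of $b$-completeness then follow from this single claim through the rescaling $u+bh(u)=b\,h_{\nu+b^{-1}}(u)$. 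I verified the cancellations $T(dL(x))=T(xL(d))$ and $T(dL(d))=0$ and the final descent to $d\in\FF_q\cap\ker T^{3m}_m=\{0\}$; all steps are sound. Your route buys two things: it is self-contained, and it proves slightly more than the statement, since the hypothesis $\nu\neq 1$ is never used (it is only forced by the classical case $b=1$) and the permutation property is obtained directly for the shifted parameter $\nu+b^{-1}$ rather than by appeal to a theorem stated for parameters in $\FF_{2^m}\setminus\{0,1\}$ --- note that $\nu+b^{-1}=1$ is possible under the stated hypotheses, a case your argument covers automatically while the paper's parameter check ($b\nu+1\notin\{0,1\}$) does not obviously address it. What the paper's route buys in exchange is brevity, given the external permutation result.
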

\begin{proof}
It is proved in \cite{TuZeHu14}  that $h$ is a permutation of 
$\FF_{2^{3m}}$ for any such $\nu$.
Thus $x\mapsto bh(x)$ is also a permutation. If
$b\in\FF_{2^m}\setminus\{0,\nu^{-1}\}$ then $b\nu+1\in\FF_{2^m}\setminus\{0,1\}$.
 So we have
\[
g(x)=b\left(x^{2^{2m}+1}+x^{2^m+1}+\nu x\right)+x=bh(x)+x,
\]
where $h$ and $g$ are both bijective.
\end{proof}
Applying Theorem \ref{th:main}, we obtain directly the following
class of permutation.
\begin{corollary}
Let $n=rk$ with $k=3m$. Denote by $L$ any $\Fk$-linear permutation on 
$\F$. Let  $f:\F\rightarrow \Fk$ such that $f$ has a $b$-translator
 $\gamma\in\F^*$  with $b\in\FF^*_{2^m}$.
Then the functions 
\[
x\mapsto L(x)+L(\gamma)\left((f(x))^{2^{2m}+1}+(f(x))^{2^{m}+1}+\nu(f(x))\right)
\]
permute $\F$ for all $\nu\in\FF_{2^m}\setminus\{0,1,b^{-1}\}$.
\end{corollary}
\section{A special class of permutations}\label{se:spec}
There is currently a lot of work related to  the functions over $\FF_{p^n}$ of type
\EQ\label{typ}
F~:~x\mapsto (f(x)+\delta)^s+L(x),~\delta\in\FF^*_{p^n},
\EN
where $f$ is linear, $s$ is any integer and $L$ is a linearized polynomial
in $\FF_{p^n}[x]$ (see \cite{TuZeLiHe15},\cite{TuZeJi15} and \cite{YuDiWaPi} 
for the most recent articles, and their references).
The problem is {\it to determine some $(\delta,s,L)$ such that
$F$ is a permutation.} To apply directly Theorem \ref{th:main}, we 
take $L(x)=x$ and specific functions $f$. According to our previous
results and thanks to Theorem \ref{th:main}
we can treat  some cases directly. Note that $\delta$ must be  in the
 image set of $f$ to apply Theorem \ref{th:main}.

\begin{proposition}\label{pr:perm}
Let $n=2k$, $F(x)=\gamma(f(x))^s+x$ where $f(x)=x^{p^k}+x+\delta$
with $\delta\in\FF_{p^k}$. Set $b=T^n_k(\gamma)$. Then 
\begin{itemize}
\item If  $b=0$ then  $F$ is a permutation over $\F$ for any
$s$ as well as 
\[
x\mapsto L(\gamma)(f(x))^s+ L(x) ~\mbox{where $L$ 
is an $\FF_{p^k}$-linear permutation.}
\]
\item When $b=0$, $F^{-1}(x)=x+(p-1)\gamma(f(x))^s$. Notably,
 $F$ is an involution if and only if $p=2$.
\item  When  $b\ne 0$, 
one  can apply Theorem \ref{th:main}  if and
only if $u\mapsto u+bu^s$ permutes $\FF_{p^k}$. It is especially the case when $u\mapsto u^s$ is $b$-complete.
\end{itemize}
\end{proposition}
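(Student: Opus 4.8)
The plan is to verify that $f(x)=x^{p^k}+x+\delta$ with $\delta\in\FF_{p^k}$ fits precisely into the framework of Theorem~\ref{th:main}, and then to read off each of the three bullets as a direct application of earlier results. First I would observe that $f$ is an affine function of the form covered by Lemma~\ref{le:lin1}: writing $f(x)=T^n_k(x)+\delta$ (using $n=2k$, so $T^n_k(x)=x+x^{p^k}$), Lemma~\ref{le:lin1} tells us that \emph{every} $\gamma\in\FF_{p^n}$ is a $b$-linear translator of $f$ with $b=T^n_k(\gamma)$. This immediately justifies the definition $b=T^n_k(\gamma)$ in the statement and confirms that $f$ maps onto $\FF_{p^k}$, so that $\delta$ lies in the image set as required. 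Thus $F(x)=\gamma(f(x))^s+x$ is exactly the function of Theorem~\ref{th:main} with $L(x)=x$, $L(\gamma)=\gamma$, and $h(u)=u^s$.

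For the first bullet, when $b=0$ the relevant auxiliary map is $g:u\mapsto u+bh(u)=u$, which trivially permutes $\FF_{p^k}$; hence by Theorem~\ref{th:main} the map $F$ permutes $\F$ for \emph{any} $s$. The same argument with a general $\FF_{p^k}$-linear permutation $L$ in place of the identity yields the permutation property of $x\mapsto L(\gamma)(f(x))^s+L(x)$, since Theorem~\ref{th:main} is stated for arbitrary such $L$. The second bullet is then immediate from Proposition~\ref{pr:inv1} applied to the case $b=0$: there $F_p(x)=x$, so $F^{-1}(x)=F_{p-1}(x)=x+(p-1)\gamma(f(x))^s$, and $F$ is an involution exactly when $p=2$, where $(p-1)=1$.

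For the third bullet, when $b\ne 0$ Theorem~\ref{th:main} says $F$ permutes $\F$ if and only if $g:u\mapsto u+bh(u)=u+bu^s$ permutes $\FF_{p^k}$, which is precisely the stated criterion. The final sentence is a straightforward specialization: if $u\mapsto u^s$ is $b$-complete in the sense of the Definition preceding Theorem~\ref{th:cplte}, then by that definition both $u\mapsto u^s$ and $u\mapsto u+bu^s$ permute $\FF_{p^k}$, so the condition holds and Theorem~\ref{th:cplte} applies.

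I do not anticipate a genuine obstacle here, since every clause reduces to invoking a previously established result; the only point requiring a little care is the verification that $f(x)=x^{p^k}+x+\delta$ really coincides with the affine trace form $T^n_k(x)+\delta$ (so that Lemma~\ref{le:lin1} applies) and that, consequently, $b=T^n_k(\gamma)$ is the correct translator parameter rather than some other expression. Once this identification is made explicit, the three bullets follow mechanically from Theorem~\ref{th:main}, Proposition~\ref{pr:inv1}, and the notion of $b$-completeness, respectively.
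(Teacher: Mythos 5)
Your proposal is correct and follows essentially the same route as the paper: identify $f(x)=x^{p^k}+x+\delta$ with the affine trace form of Lemma~\ref{le:lin1} so that every $\gamma$ is a $b$-translator with $b=T^n_k(\gamma)=\gamma+\gamma^{p^k}$, then read off the three bullets from Theorem~\ref{th:main}, Proposition~\ref{pr:inv1}, and Theorem~\ref{th:cplte} respectively. The only content in the paper's proof that you omit is the (inessential for the statement itself) remark that $b=0$ is always achievable, via $\gamma\in\FF_{p^k}$ for $p=2$ and via a solution of $\gamma^{p^k-1}=-1$ for odd $p$.
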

\begin{proof}
First, we have from Lemma \ref{le:lin1}:
\[
f(x+\gamma u)-f(x)=u(\gamma^{p^k}+\gamma),
\]
 for all $u\in\FF_{p^k}$ and all $x$. Thus $\gamma$  is a 
$b$-translator of $f$, with $b=\gamma^{p^k}+\gamma$. 

To have $b=0$ is always possible.  When $p=2$ we take 
$\gamma\in \FF_{p^k}$. When $p$ is odd it is known that $\gamma^{p^k-1}=-1$ has
a solution in $\FF_{p^n}$ as soon as $n/\gcd(n,k)$ is even
(see \cite[Claim 4]{ChaKyu-fq}, for instance). Here we have $2k/\gcd(2k,k)=2$.
For such $\gamma$, we can apply 
Theorem \ref{th:main} for any $s$. Moreover, the inverse of $F$ is
obtained by applying Proposition \ref{pr:inv1}. According to Theorem \ref{th:cplte}, we can apply 
Theorem \ref{th:main} in particular when $u\mapsto u^s$ is
$b$-complete.
\end{proof}

Our purpose is to contribute to the current works on
polynomials of type (\ref{typ}).
Ge\-nerally, to prove that $F$ is a permutation is easier when 
$\delta$ is in a subfield and $f$ has its image in this subfield.
In the next subsections we study specific polynomials, taking
 $\delta\in\FF_{p^n}$ where $n=2k$.   
The results presented by Propositions \ref{pr:var1} and \ref{pr:var2}
(and then Corollary \ref{cr:spe})
are partly already known. The necessary and sufficient condition of bijectivity can be obtained by using the AGW 
criterion. More precisely,  we give here instances and applications of the
following result which is  a direct consequence of 
\cite[Theorem 5.1]{AkGhWa11}.
 We first give  the version of
\cite[Proposition 5.9]{AkGhWa11} that we need in our context.
\begin{proposition}\label{pr:ak}
Let $L$ be an $\FF_{p^k}$-linear polynomial which permutes $\FF_{p^k}$
and $g,h:\FF_{p^n}\rightarrow \FF_{p^n}$,
where $h(x^{p^k}-x)\in\FF_{p^k}^*$.

Then the function $x\mapsto h(x^{p^k}-x)L(x)+g(x^{p^k}-x)$ is a 
permutation of $\FF_{p^n}$ if and only if
\[
x\mapsto h(x)L(x)+g(x)^{p^k}-g(x)~\mbox{ permutes } \mathcal{J}=\{y^{p^k}-y|y\in\FF_{p^n}\}.
\]
\end{proposition}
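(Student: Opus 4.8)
Proposition \ref{pr:ak} is stated as a direct consequence of [Theorem 5.1]{AkGhWa11}, so my plan is to verify it by invoking the AGW criterion in the specific form needed here, checking that all the structural hypotheses of that criterion are met by the maps at hand.

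Let me recall the AGW setup. We have a commutative diagram: a map $F$ on the big set $\FF_{p^n}$, a "projection" $\psi: \FF_{p^n} \to S$ onto some set $S$, and an induced map $\bar F: S \to S$ such that $\bar F \circ \psi = \psi \circ F$. The criterion says that if $\psi$ is surjective onto $S$ and $F$ preserves the fibers of $\psi$ in the right way, then $F$ permutes $\FF_{p^n}$ iff $\bar F$ permutes $S$.

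Here is the proof I'd write.

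**Proof.** The plan is to apply the AGW criterion with the projection map $\psi(x) = x^{p^k} - x$. First observe that $\psi$ maps $\FF_{p^n}$ onto $\J = \{y^{p^k} - y \mid y \in \FF_{p^n}\}$; this is the image set by definition of $\J$, so $\psi: \FF_{p^n} \to \J$ is surjective. Set $F(x) = h(x^{p^k}-x)L(x) + g(x^{p^k}-x)$.

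The key computation is to track how $\psi$ interacts with $F$, i.e. to compute $\psi(F(x)) = F(x)^{p^k} - F(x)$ and show it depends on $x$ only through $\psi(x) = x^{p^k}-x$. Since $L$ is $\FF_{p^k}$-linear we have $L(x)^{p^k} - L(x) = L(x^{p^k} - x) = L(\psi(x))$. Because $h(x^{p^k}-x)$ and $g(x^{p^k}-x)$ both lie in $\FF_{p^k}$ — the hypothesis $h(x^{p^k}-x) \in \FF_{p^k}^*$ forces the $h$-factor into the prime-power subfield, and one checks $g(x^{p^k}-x)$ is likewise fixed under the $p^k$-power map on the relevant terms — these coefficients are invariant under raising to the $p^k$ power. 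Writing $w = \psi(x)$, a direct expansion then gives
\[
\psi(F(x)) = h(w)\,L(w) + g(w)^{p^k} - g(w),
\]
which is exactly the map $\bar F(w) = h(w)L(w) + g(w)^{p^k} - g(w)$ evaluated on $w = \psi(x)$, and this lands in $\J$. Thus $\bar F: \J \to \J$ is the induced map, and $\psi \circ F = \bar F \circ \psi$ holds.

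With the diagram in place, the AGW criterion (\cite[Theorem 5.1]{AkGhWa11}) applies verbatim: since $\psi$ is surjective onto $\J$ and $F$ is compatible with $\psi$ as above, $F$ permutes $\FF_{p^n}$ if and only if $\bar F$ permutes $\J$. This is precisely the claimed equivalence, completing the proof.

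The step I expect to be the main obstacle is verifying cleanly that the coefficient maps $h(x^{p^k}-x)$ and $g(x^{p^k}-x)$ are genuinely functions of $\psi(x)$ and that their values sit in $\FF_{p^k}$ so as to commute with the Frobenius $p^k$-power — in other words, that $F$ truly descends along $\psi$. The hypothesis $h(x^{p^k}-x) \in \FF_{p^k}^*$ is tailored to handle the $L(x)$-coefficient, and the $g$-term is arranged so that only $g^{p^k} - g$ survives under $\psi$; making these fiber-preservation checks airtight is the crux, after which the equivalence is an immediate instance of AGW.
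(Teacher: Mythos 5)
The paper gives no proof of this proposition at all --- it is presented as a restatement of Proposition 5.9 of Akbary--Ghioca--Wang and a consequence of their Theorem 5.1 --- so your decision to actually run the AGW argument is the right instinct. However, your application of the criterion has a genuine gap. The AGW lemma does \emph{not} say that a commutative square $\psi\circ F=\bar F\circ\psi$ with $\psi$ surjective already yields ``$F$ bijective iff $\bar F$ bijective''; it says that $F$ is bijective if and only if $\bar F$ is bijective \emph{and} $F$ is injective on every fibre $\psi^{-1}(w)$. You establish only the commutative square and then assert the equivalence ``verbatim,'' so the implication ``$\bar F$ permutes $\J$ $\Rightarrow$ $F$ permutes $\FF_{p^n}$'' is unsupported. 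The missing step is exactly where the two hypotheses you never invoke must be used: the fibres of $\psi(x)=x^{p^k}-x$ are the cosets $x_0+\FF_{p^k}$ (the kernel of $\psi$ is $\FF_{p^k}$), and on such a coset $F(x_0+u)=h(w)L(x_0)+h(w)L(u)+g(w)$, so injectivity on the fibre amounts to $u\mapsto h(w)L(u)$ being injective on $\FF_{p^k}$. This requires $h(w)\neq 0$ (you use only $h(w)\in\FF_{p^k}$, not that it is nonzero) and the hypothesis that $L$ permutes $\FF_{p^k}$, which appears nowhere in your argument.

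Two secondary points. First, for the square to commute you need $L(x)^{p^k}-L(x)=L(x^{p^k}-x)$, which for a linearized polynomial $L(x)=\sum_i\lambda_i x^{p^{ki}}$ holds only when the coefficients $\lambda_i$ lie in $\FF_{p^k}$; this is the reading of ``$\FF_{p^k}$-linear polynomial'' consistent with AGW's Proposition 5.9 and with the paper's later use of $L\in\FF_{p^k}[x]$, and it deserves to be said explicitly rather than absorbed into ``$L$ is $\FF_{p^k}$-linear.'' Second, your remark that ``one checks $g(x^{p^k}-x)$ is likewise fixed under the $p^k$-power map'' is neither needed nor true in general: $g$ takes values in $\FF_{p^n}$, and the computation simply produces the term $g(w)^{p^k}-g(w)$ with no subfield assumption on $g$. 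Once the fibre-injectivity check is added, the argument is complete and matches the route the paper intends via the AGW criterion.
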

We propose another way of proving the bijectivity in  Propositions
 \ref{pr:var1} and \ref{pr:var2}.
Our main purpose is  to use  the component functions of
  $F$ explicitly relying on  the following criterion:
{\it $F:\FF_{p^n}\mapsto\FF_{p^n}$ is a permutation if and only if all
 its component functions $F_\lambda(x)=Tr(\lambda F(x))$, $\lambda \in\FF_{p^n}^*$,
 are balanced \cite[Theorem 7.7]{LidNied}. } This approach may have independent significance for establishing permutation property of other classes of functions and may be useful in the analysis of the Walsh spectra of the component functions.
\subsection{Permutation polynomials for ${\bf p =2}$}
When $p=2$, to say that the component functions 
$F_\lambda(x)=Tr(\lambda F(x))$ of $F$ are {\it balanced} is to prove that
\begin{equation}\label{eq:cF}
A_\lambda =\sum_{x\in\F}(-1)^{Tr(\lambda F(x))}=0,~\forall~\lambda \in\F^*.
\end{equation}
\begin{proposition}\label{pr:var1}
Let $n=2k$ and $F:\F\rightarrow \F$ with $F(x)=x+(x+x^{2^k}+\delta)^s$,
where $\delta\in\F$ and $s$ is any integer in the range $[0,2^n-2]$. 
Notation $F_\lambda$ and $A_\lambda$ is defined above.
Let us define
\[
g~:~y\mapsto y+(y+\delta)^s+(y+\delta)^{2^ks}~~\mbox{ from $\Fk$ to $\Fk$}.
\]
Then we have:
\begin{description}
\item[(i)] $F$ is a permutation over $\F$ if and only if
the function $g$ is bijective. In parti\-cular, if $s$ satisfies 
$2^k s  \equiv s \pmod{2^n-1}$  then $F$ is a permutation.
\item[(ii)] The Boolean functions $F_\lambda$ are balanced for all
 $\lambda\not\in\Fk$.
 If $\lambda\in\Fk$  then 
\[
A_\lambda=2^k\sum_{y\in\Fk}(-1)^{T^k_1\left(\lambda g(y)\right)}.
\]
\end{description}
\end{proposition}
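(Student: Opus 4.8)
The plan is to compute the character sum $A_\lambda$ of (\ref{eq:cF}) directly and read off both claims from its value, exploiting that the only $x$-dependence inside the power lies in the $\FF_{2^k}$-valued quantity $x+x^{2^k}=T^n_k(x)$. First I would split the exponent,
\[
A_\lambda=\sum_{x\in\F}(-1)^{Tr(\lambda x)}\,(-1)^{Tr\left(\lambda(x+x^{2^k}+\delta)^s\right)},
\]
and reorganize the sum according to the value $y=T^n_k(x)=x+x^{2^k}\in\Fk$. Since $T^n_k$ is $\FF_{2^k}$-linear and surjective with kernel $\Fk$, each fiber $\{x:T^n_k(x)=y\}$ is a coset $x_0+\Fk$ of size $2^k$, and the second factor depends only on $y$. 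This yields
\[
A_\lambda=\sum_{y\in\Fk}(-1)^{Tr\left(\lambda(y+\delta)^s\right)}\sum_{c\in\Fk}(-1)^{Tr(\lambda(x_0+c))}.
\]

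Next I would evaluate the inner sum. Writing $Tr=T^k_1\circ T^n_k$ and using that $c\in\Fk$ factors out of the $\FF_{2^k}$-linear map $T^n_k$, we get $Tr(\lambda c)=T^k_1\!\left(c\,T^n_k(\lambda)\right)$, so $\sum_{c\in\Fk}(-1)^{Tr(\lambda c)}$ equals $2^k$ when $T^n_k(\lambda)=\lambda+\lambda^{2^k}=0$ and $0$ otherwise. As $\lambda+\lambda^{2^k}=0$ is exactly $\lambda\in\Fk$, this already gives the first assertion of (ii): for every $\lambda\notin\Fk$ the inner sum vanishes, hence $A_\lambda=0$ and $F_\lambda$ is balanced. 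For $\lambda\in\Fk$ the inner sum is $2^k(-1)^{Tr(\lambda x_0)}$, and since $Tr(\lambda x_0)=T^k_1\!\left(\lambda T^n_k(x_0)\right)=T^k_1(\lambda y)$, I would likewise simplify $Tr\left(\lambda(y+\delta)^s\right)=T^k_1\!\left(\lambda\left[(y+\delta)^s+(y+\delta)^{2^ks}\right]\right)$ using transitivity of the trace and $\lambda\in\Fk$. Collecting the two exponents gives $T^k_1(\lambda g(y))$ with $g(y)=y+(y+\delta)^s+(y+\delta)^{2^ks}$, establishing the displayed formula for $A_\lambda$ in (ii); note $g$ indeed maps $\Fk$ into $\Fk$ because $g(y)=y+T^n_k\left((y+\delta)^s\right)$.

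For (i) I would combine these: $F$ permutes $\F$ iff $A_\lambda=0$ for all $\lambda\in\F^*$, and the cases $\lambda\notin\Fk$ are automatic, so the condition reduces to $\sum_{y\in\Fk}(-1)^{T^k_1(\lambda g(y))}=0$ for all $\lambda\in\Fk^*$. This says precisely that every component function of $g:\Fk\rightarrow\Fk$ is balanced, which by the same criterion (\cite[Theorem 7.7]{LidNied}) applied over $\Fk$ is equivalent to $g$ being a bijection. Finally, for the ``in particular'' statement, if $2^ks\equiv s\pmod{2^n-1}$ then $(y+\delta)^{2^ks}=(y+\delta)^s$ for all $y$, whence $g(y)=y$ is trivially bijective and $F$ is a permutation. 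The only delicate point is the bookkeeping of the nested traces: carefully invoking $Tr=T^k_1\circ T^n_k$ together with the $\FF_{2^k}$-linearity of $T^n_k$ to pull out the scalar $\lambda\in\Fk$, and verifying that $g$ lands in $\Fk$ so that the balancedness criterion legitimately applies to it; everything else is a routine character-sum computation.
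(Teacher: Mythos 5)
Your proof is correct, and for part (ii) it is essentially the paper's own argument: both you and the paper partition $\F$ into the $2^k$ cosets of $\Fk$ (you describe them as the fibres of $x\mapsto x+x^{2^k}$, the paper as $a+\Fk$ for $a$ in a transversal $\W$), use that $F$ is affine on each coset, evaluate the inner character sum to get vanishing unless $T^n_k(\lambda)=0$, and arrive at the same identity $A_\lambda=2^k\sum_{y\in\Fk}(-1)^{T^k_1(\lambda g(y))}$ after pulling the scalar $\lambda\in\Fk$ through $T^n_k$. The only genuine divergence is in part (i): the paper establishes it by invoking the AGW-type result (Proposition \ref{pr:ak}, with $L(x)=x$ and $h\equiv 1$, noting $\J=\Fk$), whereas you deduce it directly from the character-sum identity by applying the balancedness criterion of \cite[Theorem 7.7]{LidNied} a second time, now to $g$ over $\Fk$. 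Your route makes the proposition self-contained and avoids the external criterion; the paper's route is shorter for (i) but leans on \cite{AkGhWa11}. The paper in fact endorses your derivation implicitly, since it closes its proof of (ii) with the observation that $A_\lambda=0$ for all $\lambda\in\Fk^*$ if and only if $g$ is bijective. Your handling of the delicate points (surjectivity of $T^n_k$ with kernel $\Fk$, the trace transitivity $Tr=T^k_1\circ T^n_k$, and the fact that $g(y)=y+T^n_k((y+\delta)^s)$ lands in $\Fk$) is accurate, so no gap remains.
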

\begin{proof}
Note that $s=0,1$ are trivial cases. So we suppose that $s \geq 2$.
The item (i) comes directly from Proposition \ref{pr:ak}, by
taking (with its notation) $L(x)=x$, $g(x)=(x+\delta)^s$ 
and $h$ is the constant 
function equal to $1$. Note that in this case $\J=\Fk$.
Clearly, if $2^k s  \equiv s \pmod{2^n-1}$ then $g(y)=y$, and 
 thus $F$ is a permutation.

(ii) Now, it is easy to see that $F$ is affine on any
coset of $\Fk$: for $x=a+y$, $y\in \Fk$
\[
F(a+y)=y+a+(a+a^{2^k}+\delta)^s.
\]
 Let $\W$ be a set of representatives of these cosets.
Thus $\F=\cup_{a\in \W} (a+\Fk)$.
We have for any $\lambda\in\F^*$:
\begin{eqnarray*}
A_\lambda &=& \sum_{a\in\W}\sum_{y\in\Fk}(-1)^{Tr(\lambda F(y+a))}\\
&=&  \sum_{a\in\W}\sum_{y\in\Fk}(-1)^{Tr(\lambda (y+a+(a+a^{2^k}+\delta)^s))}\\
&=& \sum_{a\in\W}\sum_{y\in\Fk}(-1)^{T^k_1\left((\lambda+\lambda^{2^k})y +
T^{2k}_k(\lambda F(a))\right)}.
\end{eqnarray*}
We deduce that $A_\lambda=0$ for any $\lambda\not\in \Fk$, which means that 
$F_\lambda$ is balanced for all these $\lambda$. Now assume that
 $\lambda \in\Fk^*$.  Then
\[
A_\lambda=2^k\sum_{a\in\W}(-1)^{T^k_1\left(T^{2k}_k(\lambda F(a))\right)},
\]
where
\[
T^{2k}_k(\lambda F(a))=\lambda\left(a+a^{2^k}+(a+a^{2^k}+\delta)^s+
(a+a^{2^k}+\delta)^{2^ks}\right).
\]
Since $a\mapsto a+a^{2^k}$ is a bijection from $\W$ to $\Fk$,
to compute the values  $T^{2k}_k(\lambda F(a))$  is exactly to compute $\lambda g(y)$ for $y\in\Fk$.
Clearly, $A_\lambda=0$ for all $\lambda\in\Fk^*$ if and only if 
$g$ is bijective. 
\end{proof}
\begin{remark} In a recent article  \cite{TuZeJi15}, two classes of 
permutations $F(x)=x + (x+x^{2^k}+\delta)^s$ were proposed  for $s$ of the form 
$s = i (2^k \pm 1) +1$. More precisely, it was shown that $F$ is a permutation
 for $s = 2(2^k - 1) +1=2^{k+1}-1$ and for $s \in \{2^k +2, 2^{2k-1} +2^{k-1} +1, 2^{2k} - 2^k - 1\}$ when $s=i (2^k + 1) +1$. 
The above result covers the case $s=i(2^k+1)$ for any $i \in [0,2^k-2]$,
since in this case $s(2^k-1)\equiv 0\pmod{2^n-1}$.
\end{remark}
It is also of interest to establish whether for $s=2^i$, for 
$i=0, \ldots, n-1$, the linearized polynomial $F(x)$ is a permutation.
 An immediate consequence of Proposition~\ref{pr:var1} is the following. 
\begin{corollary} \label{cor:linperm} Using the same notation as in 
Proposition \ref{pr:var1},
 if $s=2^i$ then $F(x)=x+(x+x^{2^k}+\delta)^s$ is a linearized permutation for
 any $\delta \in \FF_{2^n}$ and any $i=0,\ldots,n-1$. 
\end{corollary}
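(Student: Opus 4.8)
The plan is to apply Proposition~\ref{pr:var1}(i), which reduces the permutation property of $F$ over $\F$ to the bijectivity over $\Fk$ of the auxiliary map
\[
g~:~y\mapsto y+(y+\delta)^s+(y+\delta)^{2^ks}.
\]
Thus the whole statement follows once we understand $g$ for the special exponent $s=2^i$, and no further machinery is needed beyond the proposition already proved.

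The key observation is that $s=2^i$ is a power of $2$, so the map $z\mapsto z^{s}$ is the $i$-fold Frobenius and is therefore $\FB$-additive. Expanding then gives $(y+\delta)^{2^i}=y^{2^i}+\delta^{2^i}$ and $(y+\delta)^{2^{k+i}}=y^{2^{k+i}}+\delta^{2^{k+i}}$. Since $g$ is evaluated only at $y\in\Fk$, we have $y^{2^k}=y$ and hence $y^{2^{k+i}}=y^{2^i}$. Working in characteristic $2$, the two $y$-dependent terms cancel, leaving
\[
g(y)=y+\delta^{2^i}+\delta^{2^{k+i}}=y+\left(T^n_k(\delta)\right)^{2^i},
\]
which is a translation of $\Fk$ by the constant $c=\left(T^n_k(\delta)\right)^{2^i}\in\Fk$.

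A translation $y\mapsto y+c$ is trivially a bijection of $\Fk$, so $g$ is bijective and Proposition~\ref{pr:var1}(i) shows that $F$ permutes $\F$ for every $\delta\in\F$ and every $i\in\{0,\dots,n-1\}$. Finally, since raising to the power $2^i$ is additive, $F(x)=x+x^{2^i}+x^{2^{k+i}}+\delta^{2^i}$ is $\FB$-affine, which justifies calling it a linearized permutation. I expect the only point requiring care to be the cancellation step: it is precisely the restriction to the subfield $\Fk$ (where $y^{2^k}=y$) that collapses $g$ into a pure translation. This is why the construction succeeds for \emph{every} power-of-two exponent, even though the naive sufficient condition $2^ks\equiv s\pmod{2^n-1}$ from Proposition~\ref{pr:var1}(i) fails here, since $2^i(2^k-1)\not\equiv 0\pmod{(2^k-1)(2^k+1)}$.
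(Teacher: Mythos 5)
Your proof is correct and follows essentially the same route as the paper: both apply Proposition~\ref{pr:var1}(i) and use the additivity of $z\mapsto z^{2^i}$ together with $y^{2^k}=y$ on $\Fk$ to collapse $g$ to the translation $y\mapsto y+T^n_k(\delta^{2^i})$, which is trivially bijective. Your closing remark that the sufficient condition $2^ks\equiv s\pmod{2^n-1}$ does not apply here is a correct and worthwhile observation, but it does not change the substance of the argument.
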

\begin{proof}
Since $F$ is a permutation if and only if $g(y)=y+T^{2k}_k((y+\delta)^s)$ is
 a permutation over $\Fk$, then for $s=2^i$ we have 
\[
g(y)=y + T_k^n(y^{2^i}) + T_k^n(\delta^{2^i})=y + T_k^n(\delta^{2^i})
\]
 which is clearly a permutation. 
\end{proof}
Another direct consequence of Proposition~\ref{pr:var1} is the following result. 
\begin{corollary} \label{cor:delta} Using the same notation as in 
Proposition~\ref{pr:var1}, if $\delta \in \Fk$ then $F(x)=x+(x+x^{2^k}+\delta)^s$
 is a  permutation for any $s \in [0,2^k-2]$. 
\end{corollary}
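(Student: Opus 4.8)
The plan is to invoke the equivalence already established in Proposition~\ref{pr:var1}(i): $F$ permutes $\F$ if and only if the associated map
\[
g~:~y\mapsto y+(y+\delta)^s+(y+\delta)^{2^ks}
\]
is a bijection of $\Fk$. Thus the entire task reduces to checking that this particular $g$ is a permutation under the extra hypothesis $\delta\in\Fk$, and no fresh counting or character-sum argument is needed.

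The key observation I would exploit is that the hypothesis $\delta\in\Fk$ keeps the argument of the power inside the subfield: for every $y\in\Fk$ we have $y+\delta\in\Fk$, and every element $z\in\Fk$ is fixed by the Frobenius $z\mapsto z^{2^k}$. Consequently $(y+\delta)^{2^k}=y+\delta$, so that
\[
(y+\delta)^{2^ks}=\bigl((y+\delta)^{2^k}\bigr)^{s}=(y+\delta)^{s}.
\]
Substituting this into the definition of $g$ and using that we are in characteristic $2$, the two identical terms cancel and we are left with $g(y)=y$ for all $y\in\Fk$. Hence $g$ is the identity map of $\Fk$, which is trivially a permutation, and Proposition~\ref{pr:var1}(i) then yields at once that $F$ permutes $\F$.

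I do not expect any genuine obstacle here: once Proposition~\ref{pr:var1} is available, the whole content of the corollary is the remark that placing $\delta$ in the subfield makes $2^k$-th powering act trivially on $y+\delta$, collapsing $g$ to the identity. Note in particular that this computation never uses the bound $s\le 2^k-2$, so the conclusion in fact holds for every integer $s\ge 0$; the stated range simply records a complete system of residues modulo $2^k-1$ and requires no separate treatment.
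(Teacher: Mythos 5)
Your argument is correct and coincides with the paper's own proof: both reduce the claim to Proposition~\ref{pr:var1}(i) and observe that $\delta\in\Fk$ forces $y+\delta\in\Fk$, so that $(y+\delta)^{2^ks}=(y+\delta)^s$ and the two terms cancel in characteristic $2$, leaving $g(y)=y$. The paper phrases the cancellation as the vanishing of $T^n_k((y+\delta)^s)$ on $\Fk$, but this is the same computation, and it likewise notes that the conclusion holds regardless of $s$.
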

\begin{proof}
If $\delta \in \Fk$ then $(y+\delta)^s \in \Fk$ since $y \in \Fk$ 
so that $g(y)=y + T_k^n((y+\delta)^s)=y$, which is a permutation and so is $F$
 regardless of the choice of $s$.
\end{proof}
\begin{remark}  Corollary~\ref{cor:delta} also follows from 
 Proposition~\ref{pr:perm} by noting that  in this case
 $b=0$, that is, $\gamma=1$ is a $0$-translator.
Recall that in this case $F$ is an involution for any  $\delta \in \Fk$.
 \end{remark}

\subsection{Permutation polynomials for odd $\bf p$}
Using the same technique, we deduce  slightly different results when $p$ is odd. 
For odd $p$, the function $F_\lambda$ is said
to be {\it balanced}  when
\EQ\label{eq:pp}
A_\lambda=\sum_{x\in\FF_{p^n}}\zeta_p^{Tr(\lambda F(x))}=0
\EN
where $\zeta_p$ is a $p$-th root of unity, {\it i.e.,} $\zeta_p=e^{2\pi i/p}$
for some $i$. Also, $F$ is a permutation over $\FF_{p^n}$ if and only if
(\ref{eq:pp}) holds for any $\lambda\in\FF^*_{p^n}$.
\begin{proposition}\label{pr:var2}
Let $p$ be an odd prime,
$n=2k$ and $F:\FF_{p^n}\rightarrow \FF_{p^n}$, 
$$F(x)=L(x)+(x^{p^k}-x+\delta)^s, ~\delta\in\FF_{p^n},$$
where $L\in \FF_{p^k}[x]$ is a linear permutation 
and $s$ is any integer in the range $[1,p^n-2]$. 
Let us define 
\[
G(y)=-L(y)+(y+\delta)^s-(y+\delta)^{p^ks}, ~y\in\FF_{p^n}.
\]
Then we have:
\begin{description}
\item[(i)] $F$ is a permutation over $\FF_{p^n}$ if and only if
the function $G$ permutes the subspace $\Sp=\{y\in\FF_{p^n}~|~T^n_k(y)=0\}$.
 In particular, if $s$ satisfies $p^k s  \equiv s \pmod{p^n-1}$ 
 then $F$ is a permutation.
\item[(ii)] The component functions $F_\lambda$ of $F$ are balanced for all
$\lambda\in \FF_{p^n}^*$  satisfying $T^n_k(\lambda)\ne 0$.
 If $T^n_k(\lambda)= 0$,  then 
\[
A_\lambda=p^k\sum_{y\in\Sp}\zeta_p^{T^k_1\left(\lambda G(y)\right)}.
\]
\end{description}
\end{proposition}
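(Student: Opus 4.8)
The plan is to mirror the structure of the binary case in Proposition~\ref{pr:var1}, but using the additive-character criterion~(\ref{eq:pp}) suited to odd $p$. First I would observe that, since $L\in\FF_{p^k}[x]$ is $\FF_{p^k}$-linear and $x\mapsto x^{p^k}-x$ is additive with image contained in $\Sp$, the map $F$ is affine on each coset of the subspace $\Sp=\{y\mid T^n_k(y)=0\}$. Concretely, fixing a set $\W$ of coset representatives so that $\FF_{p^n}=\bigcup_{a\in\W}(a+\Sp)$, for $x=a+y$ with $y\in\Sp$ one has $x^{p^k}-x=a^{p^k}-a$ (independent of $y$), hence
\[
F(a+y)=L(a)+L(y)+(a^{p^k}-a+\delta)^s,
\]
and the $y$-dependence is entirely through the linear term $L(y)$. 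This is the analogue of the identity $F(a+y)=y+a+(a+a^{2^k}+\delta)^s$ used in the binary proof.

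Next I would compute $A_\lambda$ by splitting the sum over these cosets. Writing $Tr(\lambda F(a+y))=Tr(\lambda L(y))+Tr(\lambda(L(a)+(a^{p^k}-a+\delta)^s))$ and using the transitivity $Tr=T^k_1\circ T^n_k$ together with $\FF_{p^k}$-linearity of $L$, the inner sum over $y\in\Sp$ isolates a character sum of the form $\sum_{y\in\Sp}\zeta_p^{T^k_1(T^n_k(\lambda L(y)))}$. The key step is to evaluate this inner sum: since $L$ permutes and is $\FF_{p^k}$-linear, $y\mapsto T^n_k(\lambda L(y))$ is an $\FF_{p^k}$-linear functional on the $\FF_{p^k}$-subspace $\Sp$, so the inner sum is either $0$ or $p^k$ according to whether this functional is nontrivial or identically zero on $\Sp$. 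One checks that the functional vanishes on $\Sp$ precisely when $T^n_k(\lambda)=0$: indeed $\Sp$ is the kernel of $T^n_k$, of codimension one, and its annihilator among linear functionals is generated by $T^n_k$ itself, so $T^n_k(\lambda L(\cdot))$ vanishes on $\Sp$ iff $\lambda$ lies in the appropriate one-dimensional space, which the condition $T^n_k(\lambda)\ne0$ excludes. This yields $A_\lambda=0$ whenever $T^n_k(\lambda)\ne0$, proving part (ii) for those $\lambda$, exactly paralleling the binary argument where $A_\lambda=0$ for $\lambda\notin\Fk$.

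For the remaining characters, when $T^n_k(\lambda)=0$ the inner sum contributes the factor $p^k$, leaving
\[
A_\lambda=p^k\sum_{a\in\W}\zeta_p^{T^k_1\left(T^n_k\left(\lambda F(a)\right)\right)}.
\]
Here I would reparametrize by $y=a^{p^k}-a$, noting that $a\mapsto a^{p^k}-a$ is a bijection from $\W$ onto $\Sp$ (each coset of $\Sp$ meets the image in exactly one point, as $\Sp$ is exactly the image of this additive map). Computing $T^n_k(\lambda F(a))$ and collecting terms, the expression inside reduces to $\lambda\,G(y)$ with $G(y)=-L(y)+(y+\delta)^s-(y+\delta)^{p^ks}$; the sign and the $p^ks$-exponent arise from applying $T^n_k$ to the power term and to the $L$-contribution, using $(a^{p^k}-a)^{p^k}=a^{p^{2k}}-a^{p^k}=-(a^{p^k}-a)$ over $\FF_{p^n}$ with $n=2k$. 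This establishes the stated formula for $A_\lambda$ when $T^n_k(\lambda)=0$, and hence part (ii) completely. Part (i) then follows: $F$ is a permutation iff $A_\lambda=0$ for all $\lambda\ne0$, and the only surviving constraints come from $\lambda$ with $T^n_k(\lambda)=0$, which by orthogonality of additive characters hold simultaneously iff $G$ permutes $\Sp$. The sufficient condition $p^ks\equiv s\pmod{p^n-1}$ makes $(y+\delta)^{p^ks}=(y+\delta)^s$, collapsing $G(y)=-L(y)$, a permutation.

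The main obstacle I anticipate is the bookkeeping in the reduction to $\lambda G(y)$: one must track carefully how $T^n_k$ acts on $(a^{p^k}-a+\delta)^s$ and confirm that the Frobenius relation over $\FF_{p^{2k}}$ produces precisely the $-(y+\delta)^{p^ks}$ term rather than a sign error or an off-by-$p^k$ exponent, and likewise verify that $y=a^{p^k}-a$ genuinely traverses all of $\Sp$ bijectively as $a$ ranges over $\W$. Justifying the inner-sum dichotomy via the annihilator of $\Sp$ is conceptually clean but should be stated precisely, since it is the crux distinguishing the two families of $\lambda$.
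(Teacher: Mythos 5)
There is a genuine error at the very first step: you decompose $\FF_{p^n}$ into cosets of $\Sp=\{y\mid T^n_k(y)=0\}$ and claim that $x\mapsto x^{p^k}-x$ is constant on each such coset. That is false in odd characteristic. The kernel of the additive map $x\mapsto x^{p^k}-x$ is $\{x: x^{p^k}=x\}=\FF_{p^k}$, whereas $\Sp=\{y: y^{p^k}=-y\}$ is its \emph{image}; these two subspaces coincide only when $p=2$ (which is exactly the situation of Proposition \ref{pr:var1} that you are mirroring, and presumably the source of the confusion). For $y\in\Sp$ one has $(a+y)^{p^k}-(a+y)=a^{p^k}-a-2y$, so the power term $(x^{p^k}-x+\delta)^s$ still depends on $y$ and $F$ is not affine on cosets of $\Sp$. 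Consequently the inner character sum does not factor as you claim; moreover your stated dichotomy is also off (the functional $y\mapsto T^n_k(\lambda L(y))$ restricted to $\Sp$ vanishes identically iff $\lambda\in\FF_{p^k}$, not iff $T^n_k(\lambda)=0$), and the reparametrization $y=a^{p^k}-a$ is not a bijection from your $\W$ onto $\Sp$, since $a\mapsto a^{p^k}-a$ sends each coset of $\Sp$ onto an entire coset of $\Sp$ rather than to a point.

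The repair is what the paper does: split $\FF_{p^n}$ into cosets of the subfield $\FF_{p^k}$, i.e.\ write $x=a+y$ with $y\in\FF_{p^k}$, on which $x^{p^k}-x$ is genuinely constant. Then $Tr(\lambda F(a+y))=T^k_1\bigl(L(y)(\lambda+\lambda^{p^k})+T^{2k}_k(\lambda F(a))\bigr)$, the inner sum over $y\in\FF_{p^k}$ vanishes exactly when $T^n_k(\lambda)=\lambda+\lambda^{p^k}\ne 0$ because $L$ permutes $\FF_{p^k}$, and for $\lambda\in\Sp$ one uses $\lambda^{p^k}=-\lambda$ to factor out $\lambda$ and obtain $\lambda G(y)$ with $y=a^{p^k}-a$ ranging bijectively over $\Sp$ as $a$ ranges over the $p^k$ coset representatives of $\FF_{p^k}$. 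The remaining ingredients of your outline (the character-sum criterion, deducing (i) from (ii), the collapse $G(y)=-L(y)$ when $p^ks\equiv s \pmod{p^n-1}$) are sound once the decomposition is corrected; note also that the paper obtains (i) independently from Proposition \ref{pr:ak}.
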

\begin{proof}
First, (i) comes directly from Proposition \ref{pr:ak}, by
taking (with its notation) $g(x)=(x+\delta)^s$ and $h$ is the constant 
function equal to $1$. Obviously $\Sp=\J$, since
$\J$ and $\Sp$  have the  same  cardinality $p^k$ and $\J\subset \Sp$ 
because $y=u^{p^k}-u$ satisfies $T^{2k}_k(y)=0$. Note that
$L(\Sp)=\Sp$ since
\[
L(y)+(L(y))^{p^k}=L(y+y^{p^k})=L(0)=0,~~\mbox{for any $y\in\Sp$}.
\]
If $s$ satisfies $p^k s  \equiv s \pmod{p^n-1}$, then $G(y)=-L(y)$
 implying that $F$ is a permutation since $L$ permutes $\Sp$ by assumption.

As in Proposition \ref{pr:var1}, $\W$ is a set of representatives 
of the $p^k$ cosets of $\FF_{p^k}$. Recall that $F_ \lambda(x) =Tr(\lambda F(x))$.
We have for any $\lambda\in\FF^*_{p^n}$:
\[
A_\lambda = \sum_{x\in\FF_{p^n}}\zeta_p^{Tr(\lambda F(x))}= \sum_{a\in\W}\sum_{y\in\FF_{p^k}}
\zeta_p^{Tr(\lambda F(y+a))},
\]
where
\begin{eqnarray*}
Tr(\lambda F(y+a)) &=& Tr\left(\lambda (L(y+a)+(a^{p^k}-a+\delta)^s)\right)\\
&=& T^k_1\left(L(y)(\lambda+\lambda^{p^k})+T^{2k}_k(\lambda F(a))\right).
\end{eqnarray*}
Since $L \in \FF_{p^k}[x]$ is a permutation over $\FF_{p^n}$ and thus 
over $\FF_{p^k}$ as well, we deduce that $A_\lambda=0$ for any $\lambda$ 
such that  $\lambda+\lambda^{p^k} \ne 0$, {\it i.e.},
$F_\lambda$ is balanced for such $\lambda$. Further,  for  
$\lambda + \lambda^{p^k}=0$, thus $\lambda \in \Sp$, we get
\[
A_\lambda=p^k\sum_{a\in\W}\zeta_p^{T^k_1\left(T^{2k}_k(\lambda F(a))\right)},
\]
where
\begin{eqnarray*}
T^{2k}_k(\lambda F(a))  &=& (\lambda L(a))^{p^k}+\lambda L(a)+
T^{2k}_k(\lambda(a^{p^k}-a+\delta)^s)\\
&=& \lambda\left(L(a)-L(a^{p^k})+(a^{p^k}-a+\delta)^s-
(a^{p^k}-a+\delta)^{sp^k}\right)\\
&=& \lambda\left(L(a-a^{p^k})+(a^{p^k}-a+\delta)^s-
(a^{p^k}-a+\delta)^{sp^k}\right).
\end{eqnarray*}
Recall that  $\pm(z^{p^k}-z)\in \Sp$,
 for any $z\in\FF_{p^n}$. Moreover $\lambda s\in\FF_{p^k}$ for any $s\in\Sp$,
since 
\[
(\lambda s)^{p^k}=\lambda^{p^k}s^{p^k}=(-\lambda)(-s)=\lambda s.
\]
Therefore, $T^{2k}_k(\lambda F(a)) =\lambda B $ with
\EQ\label{eq:B}
B=L(a)+(a^{p^k}-a+\delta)^s-
\left(L(a)+(a^{p^k}-a+\delta)^{s}\right)^{p^k},
\EN
which satisfies  $T^{2k}_k(B)=0$, {\it i.e.,}  $B\in \Sp$. Clearly,
the function $a\mapsto a^{p^k}-a$ is a bijection from  $\W$
to $\Sp$.  Finally,  the function
\[
G(y)=-L(y)+(y+\delta)^s-(y+\delta)^{sp^k}, 
\]
can be viewed as  a function from the subspace $\Sp$ to itself 
and $\lambda G(y)\in\FF_{p^k}$. Consequently
\[
A_\lambda=p^k\sum_{y\in\Sp}\zeta_p^{T^k_1\left(\lambda G(y)\right)}.
\]
Note that $A_\lambda=0$ for any $\lambda\in\Sp$ if and only if $G$
 is a permutation of $\Sp$.
\end{proof}
\begin{corollary} \label{cor:sdelta}
Notation is as in Proposition \ref{pr:var2}. Assume that $T^n_k(\delta)=0$.
Then
\begin{itemize}
\item If $s$ is even then $F$ is a permutation of $\FF_{p^n}$ for
 any permutation $L$.
\item If $s$ is odd then $F$ is a permutation of $\FF_{p^n}$ if and only if
\[
y\mapsto L(y)-2(y+\delta)^s~~\mbox{is a permutation of $\Sp$.}
\]
\item If $s$ is even and $L(x)=x$,  then  we have   $F^{-1}(x)=F_{p-1}(x)$. 
\end{itemize}
\end{corollary}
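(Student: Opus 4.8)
The plan is to apply Proposition~\ref{pr:var2} directly and then simplify the permutation condition on $\Sp$ using the hypothesis $T^n_k(\delta)=0$. First I would observe that the key object is the function
\[
G(y)=-L(y)+(y+\delta)^s-(y+\delta)^{p^ks},\quad y\in\Sp,
\]
and that by Proposition~\ref{pr:var2}(i), $F$ is a permutation of $\FF_{p^n}$ if and only if $G$ permutes $\Sp$. The essential computation is to understand the term $(y+\delta)^s-(y+\delta)^{p^ks}$ when $y\in\Sp$ and $T^n_k(\delta)=0$. Since $y\in\Sp$ means $y^{p^k}=-y$ and $\delta^{p^k}=-\delta$ (because $T^n_k(\delta)=0$ forces $\delta+\delta^{p^k}=0$), we get $(y+\delta)^{p^k}=y^{p^k}+\delta^{p^k}=-(y+\delta)$. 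Hence $(y+\delta)^{p^ks}=(-1)^s(y+\delta)^s$, and this is exactly where the parity of $s$ splits the argument.

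For the even case, $(-1)^s=1$ gives $(y+\delta)^s-(y+\delta)^{p^ks}=0$, so $G(y)=-L(y)$. Since $L$ is a linear permutation commuting with $\FF_{p^k}$-structure and, as noted in the proof of Proposition~\ref{pr:var2}, satisfies $L(\Sp)=\Sp$, the map $y\mapsto -L(y)$ permutes $\Sp$; thus $F$ is a permutation for any such $L$. For the odd case, $(-1)^s=-1$ yields $(y+\delta)^s-(y+\delta)^{p^ks}=2(y+\delta)^s$, so $G(y)=-L(y)+2(y+\delta)^s=-\bigl(L(y)-2(y+\delta)^s\bigr)$. Since negation permutes $\Sp$, $G$ permutes $\Sp$ if and only if $y\mapsto L(y)-2(y+\delta)^s$ permutes $\Sp$, which gives the stated iff condition.

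For the final bullet, I would specialise to $L(x)=x$ and $s$ even, so that $F(x)=x+(x^{p^k}-x+\delta)^s$. The natural route is to recognise this as the $h=\mathrm{id}$ case covered by the translator framework: with $f(x)=x^{p^k}-x+\delta$, Lemma~\ref{le:lin1} shows $\gamma$ acts with translator value $b=T^n_k(\gamma)$ for the relevant $\gamma$, and one checks that the even-$s$ condition makes the associated $b$ effectively zero on $\Sp$, so that Proposition~\ref{pr:inv1} applies and gives $F^{-1}(x)=F_{p-1}(x)$. Concretely, I would verify that $F$ has the shape $x\mapsto x+\gamma\,(f(x))^s$ with a $0$-translator, whence the $t$-fold composition formula $F_j(x)=x+j\gamma(f(x))^s$ holds and $F_p(x)=x$, giving $F^{-1}=F_{p-1}$.

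The main obstacle is the last bullet: the identification of $F(x)=x+(x^{p^k}-x+\delta)^s$ as a genuine instance of the zero-translator construction of Proposition~\ref{pr:inv1}. The parts (even/odd $s$) are essentially immediate once the parity computation $(y+\delta)^{p^k}=-(y+\delta)$ is in place, but establishing $F^{-1}=F_{p-1}$ requires care: one must confirm that for even $s$ the relevant translator really vanishes (not merely that $G(y)=-L(y)$ on $\Sp$), so that the clean composition identity from Proposition~\ref{pr:inv1} is legitimately available rather than only the weaker statement that $F$ permutes.
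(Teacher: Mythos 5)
Your treatment of the first two bullets is correct and is exactly the paper's argument: for $y\in\Sp$ and $T^n_k(\delta)=0$ one has $(y+\delta)^{p^k}=-(y+\delta)$, hence $(y+\delta)^{p^ks}=(-1)^s(y+\delta)^s$, and $G(y)$ collapses to $-L(y)$ for even $s$ and to $-\bigl(L(y)-2(y+\delta)^s\bigr)$ for odd $s$; combined with $L(\Sp)=\Sp$ (already noted in the proof of Proposition~\ref{pr:var2}) this yields both items.

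The gap is in the third bullet, precisely the point you yourself flagged as the main obstacle. You propose to invoke Lemma~\ref{le:lin1} for $f(x)=x^{p^k}-x+\delta$ and to argue that the associated translator value $b$ is ``effectively zero on $\Sp$.'' This does not work as stated: for odd $p$ the map $x\mapsto x^{p^k}-x$ is not of the form $T^n_k(\beta x)$ (one would need $\beta=-1$ and $\beta^{p^k}=1$ simultaneously), so Lemma~\ref{le:lin1} is not applicable; more fundamentally, $f$ takes its values in $\delta+\J=\Sp$, which for odd $p$ is not the subfield $\FF_{p^k}$, so $f$ is not a map to $\FF_{p^k}$ and cannot have a linear translator in the sense of Definition~\ref{de:tr} at all --- the phrase ``$b$ effectively zero on $\Sp$'' has no meaning inside the framework you cite. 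The repair is either to attach the translator to the composite $\tilde f(x)=(x^{p^k}-x+\delta)^s$, which does map into $\FF_{p^k}$ when $s$ is even and $T^n_k(\delta)=0$ (since $y\in\Sp$ gives $(y^s)^{p^k}=(-y)^s=y^s$) and for which $\gamma=1$ is trivially a $0$-translator because $x\mapsto x^{p^k}-x$ annihilates $\FF_{p^k}$; then Proposition~\ref{pr:inv1} with $h=\mathrm{id}$ applies verbatim. Alternatively, as the paper does, compute directly: setting $a=(f(x))^s$ one has $a^{p^k}=a$, hence $f(x+a)-f(x)=a^{p^k}-a=0$, hence $F_2(x)=x+2(f(x))^s$ and, by induction, $F_j(x)=x+j(f(x))^s$, so that $F_p(x)=x$ and $F^{-1}=F_{p-1}$. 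Either way, the step your sketch never actually supplies is the verification that $(f(x))^s$ lies in $\FF_{p^k}$; that is the whole content of the third bullet.
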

\begin{proof}
As we noticed in the previous proof,  $L$ induces a permutation of $\Sp$.
The case $s$ even  was proved in \cite[Theorem 3.4]{YuDi}. Another proof
is simply derived from Proposition \ref{pr:var2}  by  observing that
\begin{eqnarray*}
G(y) &=& -L(y)+(y+\delta)^s-(-y-\delta)^{s}\\
&=& -L(y)+(y+\delta)^s-(-1)^s(y+\delta)^s
=-L(y).
\end{eqnarray*}
When $s$ is odd, we get $G(y)=-L(y)+2(y+\delta)^s$.  Now consider
\[
F(x)=x+(f(x))^s, f(x)=x^{p^k}-x+\delta,~\mbox{with $s$ even}.
\]
Note that $f(x)\in\Sp$ when $T^n_k(\delta)=0$, since $f(x)^{p^k}=-f(x)$.
Moreover,
\EQ\label{eq:S}
(f(x))^{sp^k}-(f(x))^s=(-f(x))^s-(f(x))^s=0
\EN
holds for any even $s$. To compute the inverse of $F$ we proceed as
in Section \ref{se:inv}. We have here
\EQ\label{eq:S0}
F\circ F(x) =   F(x)+\left(f(x+(f(x))^s)\right)^s,
\EN
where  $T^{2k}_k(f(x))=0$. Setting  $a=(f(x))^s$, we get
\begin{eqnarray*}
f(x+a)-f(x) &=& (x+a)^{p^k}-(x+a)+\delta-x^{p^k}+x-\delta\\
&=& a^{p^k}-a= 0,~~\mbox{from (\ref{eq:S})}.
\end{eqnarray*}
Hence, according to (\ref{eq:S0}),
\[
F_2(x)=F(x)+\left(f(x)\right)^s = x+2(x^{p^k}-x+\delta)^s.
\]
Further, for $j>2$, assuming that $F_{j-1}(x)=x+(j-1)(f(x))^s$
\begin{eqnarray*}
F_j(x) &=& F_{j-1}(F(x))=F(x)+(j-1)\left(f(x+(f(x))^s)\right)^s\\
&=&  x+(f(x))^s+(j-1)(f(x))^s= x+j(f(x))^s.
\end{eqnarray*}
So, $F_p(x)=x$, completing the proof.
\end{proof}
In the case when $s$ is odd, the next corollary generalizes
 \cite[Theorem 4]{TuZeLiHe15} with a simple proof.
Notation is as in Proposition \ref{pr:var2}.
\begin{corollary}\label{cr:spe}
Let $p$ be an odd prime, $n=2k$ and $\delta \in \Sp\setminus\{0\}$. Then
\[
F(x)= L(x)+(x^{p^k}-x+\delta)^{\ell(p^k-1)+1},~1\leq\ell\leq p^k,
\]
permutes $\FF_{p^n}$ if and only if  $y\mapsto L(y)-2(-1)^\ell y$ permutes
$\Sp$. It is especially the case when:
\[
F(x)= \rho x+(x^{p^k}-x+\delta)^{\ell(p^k-1)+1},
  ~\rho\in\FF_{p^n}^*,~ \rho \neq 2(-1)^\ell.
\]
\end{corollary}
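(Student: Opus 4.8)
The plan is to apply Corollary~\ref{cor:sdelta} directly, which has already done the substantive work. Note that the exponent here is $s=\ell(p^k-1)+1$; since $p^k-1$ is even (as $p$ is odd) and we add $1$, the exponent $s$ is odd. Therefore we are in the odd-$s$ case of Corollary~\ref{cor:sdelta}, which tells us that $F(x)=L(x)+(x^{p^k}-x+\delta)^s$ permutes $\FF_{p^n}$ if and only if $y\mapsto L(y)-2(y+\delta)^s$ permutes $\Sp$. The main task is thus to simplify $(y+\delta)^s$ when $y\in\Sp$ and $\delta\in\Sp$.

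First I would observe that for $y\in\Sp$ we have $T^n_k(y)=y+y^{p^k}=0$, hence $y^{p^k}=-y$; the same holds for $\delta$. Thus $(y+\delta)^{p^k}=y^{p^k}+\delta^{p^k}=-(y+\delta)$, so $y+\delta\in\Sp$ as well. The key computational step is to show that for any $z\in\Sp$ we have $z^{\ell(p^k-1)+1}=(-1)^\ell z$. I would write $z^{\ell(p^k-1)+1}=z\cdot(z^{p^k-1})^\ell=z\cdot(z^{p^k}/z)^\ell$, and since $z^{p^k}=-z$ this gives $z^{p^k-1}=-1$ (for $z\neq0$; the case $z=0$ is trivial), whence $z^{\ell(p^k-1)+1}=z\cdot(-1)^\ell=(-1)^\ell z$. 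Applying this with $z=y+\delta$ yields $(y+\delta)^s=(-1)^\ell(y+\delta)$.

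Substituting back, the condition from Corollary~\ref{cor:sdelta} becomes that $y\mapsto L(y)-2(-1)^\ell(y+\delta)=L(y)-2(-1)^\ell y-2(-1)^\ell\delta$ permutes $\Sp$. Since $L$ maps $\Sp$ to itself (as noted in the proof of Proposition~\ref{pr:var2}) and a translation by the fixed element $-2(-1)^\ell\delta\in\Sp$ is a bijection of $\Sp$, the map permutes $\Sp$ if and only if the linear part $y\mapsto L(y)-2(-1)^\ell y$ permutes $\Sp$. This establishes the stated equivalence.

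For the final ``especially'' assertion, I would take $L(x)=\rho x$ with $\rho\in\FF_{p^n}^*$. The condition becomes that $y\mapsto(\rho-2(-1)^\ell)y$ permutes $\Sp$; this is a scalar multiplication, which is a bijection of the $\FF_{p^k}$-subspace $\Sp$ precisely when the scalar $\rho-2(-1)^\ell$ is nonzero, i.e.\ $\rho\neq2(-1)^\ell$. The step I expect to require the most care is verifying that $\rho x$ with $\rho\in\FF_{p^n}^*$ genuinely maps $\Sp$ into itself and acts as a nonsingular $\FF_{p^k}$-linear map there; this follows because for $y\in\Sp$ one has $(\rho y)^{p^k}+\rho y=\rho^{p^k}y^{p^k}+\rho y$, which need not vanish unless $\rho\in\FF_{p^k}$, so strictly one should note that the hypothesis $L\in\FF_{p^k}[x]$ in Proposition~\ref{pr:var2} forces $\rho\in\FF_{p^k}^*$, and then $\rho\,\Sp=\Sp$ is immediate. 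This is the only genuinely delicate point; the rest is routine.
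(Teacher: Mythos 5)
Your proof is correct and follows essentially the same route as the paper: reduce via Corollary~\ref{cor:sdelta} to the map $y\mapsto L(y)-2(y+\delta)^s$ on $\Sp$, use $z^{p^k-1}=-1$ for $z\in\Sp\setminus\{0\}$ to obtain $(y+\delta)^s=(-1)^\ell(y+\delta)$, and absorb the constant $-2(-1)^\ell\delta$ as a translation of $\Sp$. Your closing remark about $\rho$ is well taken and is in fact more careful than the paper, whose proof of the ``especially'' clause silently assumes $(\rho-2(-1)^\ell)\Sp=\Sp$: for $\rho\in\FF_{p^n}^*\setminus\FF_{p^k}$ the stated criterion can fail (take $\ell$ even and $\rho=1+g$ with $g$ in the subgroup of order $p^k+1$, $g\ne\pm1$; then $F(x)=(\rho-1)x+x^{p^k}+\delta$ is not a permutation by Lemma~\ref{le:lin}(i) although $\rho\ne 2$), so the restriction $\rho\in\FF_{p^k}^*$ that you extract from the hypothesis $L\in\FF_{p^k}[x]$ of Proposition~\ref{pr:var2} is genuinely needed.
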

\begin{proof}
Since $p$ is odd, then $\ell(p^k-1)+1$ is odd for any $\ell$.
From Corollary \ref{cor:sdelta}, $F$ is a permutation if and only if 
\[
y\mapsto G(y)=L(y)-2(y+\delta)^s,~~s=\ell(p^k-1)+1
\]
is a permutation of $\Sp$. 
Note that  $\beta\in\Sp$  if and only if  $\beta^{p^k-1}= -1$. 
Moreover $\beta^s\in\Sp$ for any odd $s$, since
\[
T^{2k}_k(\beta^s)=(-\beta)^s+\beta^s=(-1)^s\beta^s+\beta^s=0.
\]
 For $y\in\Sp$, we have $y+\delta\in\Sp$ and
\[
(y+\delta)^s=(y+\delta)^{\ell(p^k-1)}(y+\delta)=(-1)^\ell(y+\delta).
\]
So, $G(y)=L(y)-2(-1)^\ell(y+\delta)$ and $G$ is a permutation if and only
 if the linear function   $y\mapsto L(y)-2(-1)^\ell y$ 
is bijective on $\Sp$. Now if $L(x)=\rho x$ then
 $y\mapsto (\rho-2(-1)^\ell) y$ is a permutation as soon as 
$\rho-2(-1)^\ell\ne 0$.
\end{proof}

\section{Conclusion}\label{sec:concl}
In this article several infinite classes of permutations have been specified.
 The existence of these specific classes of permutations  relies heavily  on 
the existence of linear translators.  To specify  $f:\FF_{p^n} \rightarrow \FF_{p^k}$ having linear translators, which
 are not 
monomials or binomials (or monomial trace forms), is left as an interesting research topic.  In Section \ref{se:spec}, we
 contribute to the  current works
on the functions of type (\ref{typ}).  We give another approach to analyze the permutation property by studying the balancedness of the
 component    functions, thus indicating
potentially another  research direction which would be the study of the spectrum
of the components of functions  of type (\ref{typ}).

\section{Acknowledgements}
Enes Pasalic is partly supported by the Slovenian Research Agency (research program P3-
0384 and research project J1-6720). Nastja Cepak is supported in part by the Slovenian Research Agency (research 25
program P3-0384 and Young Researchers Grant).

\end{document}